\documentclass[%
 reprint, nofootinbib
]{revtex4-2}

\usepackage{amssymb}
\usepackage{amsthm}
\usepackage{bm}
\usepackage{graphicx}
\usepackage[colorlinks=true,linkcolor=blue,citecolor=blue,urlcolor=blue]{hyperref}
\usepackage{tikz}
\usepackage{physics}

\theoremstyle{plain}
\newtheorem{theorem}{Theorem}
\newtheorem{proposition}{Proposition}
\newtheorem{lemma}{Lemma}

\theoremstyle{definition}
\newtheorem{definition}{Definition}

\theoremstyle{remark}

\newcommand{\ak}{\hat{a}(k)}
\newcommand{\nnorm}{\frac{1}{\sqrt{N}}}
\newcommand{\psucc}{p_{\rm succ}}
\newcommand{\pstar}{p^\star}
\newcommand{\pg}{\hat{p}_g}
\newcommand{\wm}{\langle W_m\rangle_{\rho_m}}

\begin{document}
\title{Classical Limits of Spectral Filtering in Quantum Generative Models}

\author{Marco Roth}
\affiliation{Fraunhofer Institute for Industrial Engineering IAO, Nobelstr. 12, D-70569 Stuttgart, Germany}
\email{marco.roth@iao.fraunhofer.de}
\date{\today}

\begin{abstract}
Spectral filtering has been proposed as a route to regularization in quantum generative models:
the quantum Fourier transform exposes the amplitude spectrum of a quantum circuit Born machine,
and a diagonal filter suppresses the high frequencies associated with finite-sample noise,
an operation whose classical counterpart seemingly requires manipulating an exponentially long
amplitude vector. We examine whether this coherent operation produces anything that classical
post-processing of samples from the unfiltered model cannot match. Measuring the filter against
convolution with a symmetric probability kernel at matched sampling cost, which accounts for
the post-selection overhead of attenuation, we derive necessary and sufficient conditions
for the gap between the two to vanish. Magnitude (attenuating) filters obey a dichotomy: at a fixed affordability
threshold, the filtered output is either a constant-size Fourier object with an efficient
classical sampler, or the passband must widen until no fixed frequency is attenuated and the
filter no longer smooths. In neither case does the filter create a quantum-classical separation.
Whatever separation survives is inherited from the spectral phase of the input state.
Numerical experiments on trained circuit Born machines confirm the classification and show
that the deciding phases are invisible to the Born-rule training loss
and set by the initialization. Within the diagonal family, pure phase filters remain the only
spectral operations exempt from these constraints.
\end{abstract}

\maketitle

\section{Introduction}

Using quantum computers to perform machine learning tasks is an intriguing idea that has received increasing attention in recent years~\cite{Wang_2024, ji2026quantumdeeplearningcomprehensive, Cerezo2022, PhysRevApplied.21.067001}. Motivated by the potential beyond-classical
capabilities that an exponentially sized Hilbert space offers, a lot of focus has been on trying to harness the non-classical properties of such a state space in terms
of expressivity and representability~\cite{PhysRevA.103.032430, PhysRevLett.122.040504, Caro2022}. While some clear signatures of quantum advantage have been identified~\cite{Sweke2021quantumversus, doi:10.1126/science.abn7293, Liu2021, zhao2026exponentialquantumadvantageprocessing}, the vastness of the Hilbert space and its associated objects has
proven to be a double-edged sword, and the very same properties that spawned the original interest lead to a curse-of-dimensionality problem that obstructs the training of
many quantum machine learning (QML) models of interest~\cite{thanasilp2024exponentialconcentrationquantumkernel, Ragone2024, Larocca2025}.

Recently, a line of research has focused on exploiting properties beyond expressiveness as a road to a potential quantum advantage. In this context, the quantum Fourier transform (QFT)
has received renewed attention as a tool to design ML models~\cite{wakeham2024inferenceinterferenceinvariancequantum, simidzija2026solvingapproximatehiddensubgroup, coffman2026groupfourierfilteringquantum}. It has been argued that the QFT enables spectral design operations whose classical counterparts appear to require
manipulating an exponentially long amplitude vector and are therefore suggested to be classically prohibitive~\cite{belis2026spectralmethodscrucialmachine}. Viewed through the lens of Fourier analysis, the simplicity
biases that underpin generalization acquire a concrete spectral form: smooth distributions have decaying Fourier spectra, so suppressing the high-frequency content of an empirical 
distribution acts as a regularizer that discards finite-sample artifacts [cf. Fig.~\ref{fig:overview} (a) for an illustration]. The natural model class for such spectral design is the quantum circuit Born
machine~\cite{PhysRevA.98.062324, Benedetti_2019}. In these models, a quantum state is prepared by a
(typically parameterized) circuit that defines an implicit generative model
through the Born rule from which one can draw samples but
not evaluate likelihoods. Born machines are among the leading candidates for a
quantum advantage in generative modeling, since sampling from generic circuit
families is believed to be classically
intractable~\cite{Coyle2020, PhysRevX.12.021037, Rudolph2024}. Notably, the output
distribution depends only on the magnitudes, while the phases are invisible to it. This is the basis of a recent
proposal to design quantum generative models by filtering their spectrum coherently, before measurement~\cite{belis2026spectralmethodscrucialmachine}. 

In this work, we explore the necessary conditions for such an operation to bear a potential quantum advantage for generative quantum models. A key insight is that a classical
competitor never needs the amplitude vector, only the measured distribution,
to which it has sampling access. We make the comparison precise for diagonal spectral filters acting on the Fourier-transformed state. We measure the coherent
filter against a classical competitor with the same interface, namely post-processing of samples from the unfiltered model by convolution with a probability kernel.
We take the operational reading throughout. Two models are the same if no
sampler can tell them apart at the same sampling cost, and a filter earns its coherence only if it
produces an output that no classical post-processing of the unfiltered samples can match. 

\begin{figure*}[t]
    \centering
    \includegraphics[width=0.76\textwidth]{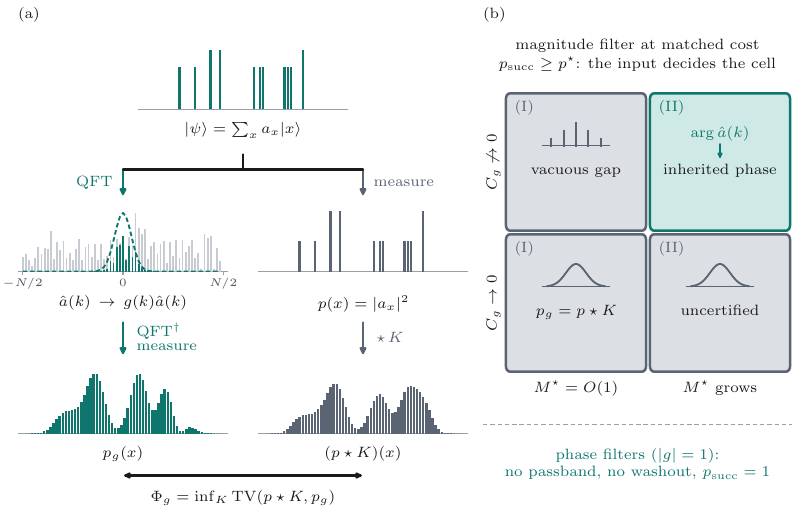}
    \caption{Overview of this work.
    (a) The two routes from the same quantum generative model Eq.~\eqref{eq:quantum_born_machine}.
    Coherent route (left): a QFT exposes the amplitude spectrum, a spectral filter attenuates it, $\ak\mapsto g(k)\ak$ (dashed: Gaussian low-pass at matched cost $\psucc\geq\pstar$), and measuring after the inverse QFT yields the filtered distribution $p_g$.
    Classical route (right): measuring first yields samples of $p(x)=\abs{a_x}^2$, which are convolved with a classical kernel $K$.
    The gap $\Phi_g$, Eq.~\eqref{eq:gap}, measures what the coherent route achieves beyond classical smoothing. The example shows the optimal kernel ($\mathrm{TV}=0.17$).
    (b) The resulting classification. At matched cost, a magnitude filter
    ($\abs{g(k)}\leq1$) either returns an $O(1)$-mode object with an explicit kernel surrogate,
    or leaves a separation that is inherited from the spectral phase of the input rather than
    produced by the filter. Genuinely quantum spectral operations are phase filters ($\abs{g(k)}=1$), which incur no sampling cost but admit no kernel surrogate.}
    \label{fig:overview}
\end{figure*}

We quantify the difference by a single quantity: the smallest total-variation distance between the coherently filtered distribution and any such classical
smoothing. We prove the conditions under which a coherent filter operation is equivalent to classical smoothing.
More generally, we find that it is instructive to separate \emph{magnitude} (attenuating) filters
from \emph{phase} filters in less restricted settings. We show that magnitude filters are fundamentally restricted by a dichotomy. Either the filtered output is
a constant-size Fourier object, or any quantum-classical separation is traceable to spectral phase already present in the
input state. In the former case, the output distribution can be reproduced by a classical model directly. In the latter case, the filter merely windows at a post-selection cost.
As a consequence, we find that a magnitude filter cannot perform smoothing at an affordable cost while leaving the spectral separation intact. Phase filters are the only spectral operations left that are genuinely quantum. The classification
is illustrated in Fig.~\ref{fig:overview} (b).

In the remainder of this work, Sec.~\ref{sec:framework} sets up quantum generative models and coherent spectral filtering, and Sec.~\ref{sec:results} develops a
classicality criterion based on the coherent residual, the compactness of the filtered output, and the resulting dichotomy for magnitude filters.
Section~\ref{sec:examples} illustrates the two cells with spectrally incoherent input, a Gaussian low-pass, and numerical experiments on random and trained Born machines.
We conclude in Sec.~\ref{sec:discussion}. Proofs and technical details are deferred to the appendices.

\section{Framework}
\label{sec:framework}
In this section, we briefly introduce the concepts used throughout the manuscript. We start by explaining quantum generative models as the object of interest
and discuss how quantum operations that shape the amplitudes of such models can be used to design the output distributions.
\subsection{Quantum Generative Models}
\label{sec:quantum_generative_models}
Consider generative machine learning models, i.e., models that allow drawing samples from a distribution $p(x)$, where the models are trained to mimic a target distribution
$p_{\rm target}$. Creating models that \emph{learn} the underlying distribution instead of overfitting to particular examples is one of the core problems in machine learning.
Fundamentally, this results in a bias-variance trade-off~\cite{Hastie2009}. A standard approach to addressing this trade-off is to introduce a simplicity bias into the model by artificially
constraining the hypothesis space~\cite{kubler2021inductive}.

In the following, we focus on achieving this regularization through manipulation of a model in \emph{Fourier} space rather than operating on the feature space directly.
This is motivated by the following insight: in practical scenarios, we can only approximate the true target distribution $p_{\rm target}$ using an empirical distribution that
is obtained by collecting samples. The empirical distribution of a finite training set is typically sparse, concentrating all of its
mass on the observed samples. This results in a Fourier spectrum that
spreads across the entire frequency range, with appreciable weight on high-order components
that reflect finite-sample noise rather than some underlying structure. Suppressing these high
frequencies with a low-pass filter therefore acts as a smoothness prior by discarding the
data-specific detail and returning a distribution that interpolates between the training
points~\cite{belis2026spectralmethodscrucialmachine}. 

For quantum machine learning models, this is often taken to be promising, since the QFT allows for efficient access to an exponentially
sized amplitude vector.
Concretely, we consider a generative quantum machine learning model that is obtained by preparing an $n$-qubit quantum state 
\begin{equation}
  \ket{\psi(\theta)}=\sum_{x=0}^{N-1} a_x(\theta)\ket{x}=U(\theta)\ket{0}\,,\label{eq:quantum_born_machine}
\end{equation}
using a parametrized quantum circuit $U(\theta)$. Here $a_x\in\mathbb{C}$ are the amplitudes of $\ket{\psi}$ in the basis $\lbrace\ket{x}\rbrace$ with $x\in\lbrace 0,\dots,N-1\rbrace$ and $N=2^n$ is the number of computational basis states of $\ket{\psi}$. From Eq.~\eqref{eq:quantum_born_machine},
we obtain the model distribution through the Born rule $p(x;\theta)=\abs{a_x(\theta)}^2$. The resulting models are thus called quantum Born machines~\cite{PhysRevA.98.062324}. Learning the target distribution can be operationalized
as minimizing a suitable loss function such as the forward Kullback--Leibler divergence $\mathcal{L}(\theta)=D_{\rm KL}(p_{\rm target}\,\|\,p(\cdot\,;\theta))$. This is achieved by optimizing the parameters $\theta$, e.g., using gradient descent.

In this work, we assume access to a (trained) Born machine \emph{before} measurement such that we can perform operations on the state of Eq.~\eqref{eq:quantum_born_machine}
directly. This will be made precise in the subsequent section.

\subsection{Spectral Filtering}
\label{sec:spectral_filtering}
In the following, we study spectral operations on the quantum Born machine. To this end, we apply a QFT to the state of Eq.~\eqref{eq:quantum_born_machine}, which yields the transformed amplitudes
\begin{equation}
    \ak = \nnorm\sum_x a_x\omega^{-kx}\,,
    \label{eq:fourier_transf}
\end{equation}
where $\omega\equiv\exp(i2\pi/N)$. Here we have omitted the dependence on the parameters $\theta$ for clarity. Note that for distributions we use the unnormalized transform
$\hat f(m)=\sum_x f(x)\omega^{-mx}$, so that $\hat p(0)=1$.
The $1/\sqrt N$ in Eq.~\eqref{eq:fourier_transf} is reserved
for amplitudes, where the transform must be unitary.

Two different spectra are now in play, and it is worth separating them at the outset. The
regularization argument of Sec.~\ref{sec:quantum_generative_models} is about the spectrum of the
\emph{distribution}, $\hat p(m)$, whereas the QFT of Eq.~\eqref{eq:fourier_transf} gives access to the
spectrum of the \emph{amplitudes}, $\ak$. The two are related by the Wiener--Khinchin theorem, which
in the above conventions is free of prefactors,
\begin{equation}
    \hat p(m) = \sum_k \rho_m(k)\,,
    \label{eq:wiener_khinchin}
\end{equation}
with the lag-$m$ amplitude product $\rho_m(k) \equiv \hat a(k)\hat a^*(k-m)$. The spectrum
of the distribution is thus the autocorrelation of the spectrum of the amplitudes, with the
lag $m$ playing the role of the shift between the two copies. For $m=0$ this returns the
normalization $\hat p(0)=\sum_k\abs{\ak}^2=1$. 

We are interested in delineating two scenarios: coherent manipulation of the amplitudes of Eq.~\eqref{eq:fourier_transf}
before measurement [left track in Fig.~\ref{fig:overview} (a)] and classical post-processing operating on the output distribution $p(x)$ obtained through 
sampling [right track in Fig.~\ref{fig:overview} (a)]. To decide whether the additional freedom of coherent access has any operational consequences, we must weigh it against what an
 experimenter with the same interface, i.e., sampling access to $p(x)$, could already achieve. We thus first fix that
classical baseline and then turn to its coherent counterpart.

To realize the regularization motivated in Sec.~\ref{sec:quantum_generative_models}, we
 model classical smoothing by linear, probability-preserving maps $G_K$ that are
invariant under translation and reflection. The latter two symmetries define a smoothing filter without shifting.
We argue why restricting the classical competition model to this class is appropriate in Appendix~\ref{app:framework}.
Under these requirements $G_K$ acts on the output distribution $p(x)$ as a
convolution with a symmetric probability kernel $K$, $G_K:\ p\mapsto p\star K$~\cite{oppenheim}, with a real, even transfer function $\hat K(m)$. By the convolution theorem this acts on the spectrum of $p$ as a
pointwise multiplication, so the filtered distribution has Fourier coefficients
\begin{equation}
    \hat{p}_G(m) \;=\; \hat{K}(m)\,\hat{p}(m)\,,
    \label{eq:convolution_filter}
\end{equation}
i.e., $G_K$ rescales each mode $m$ by the real, even factor $\hat{K}(m)$. 

Equation~\eqref{eq:convolution_filter} is the result of classical post-processing. The coherent quantum analogues of convolution filters on the
output distribution as given in Eq.~\eqref{eq:convolution_filter} are spectral filters $g:\{-N/2,\dots,N/2-1\}\to\mathbb{C}$ that are diagonal
in the Fourier basis, i.e., $g: \ak\mapsto g(k)\ak$. In the following, we assume a normalized filter $\norm{g}_\infty=1$.
As shown in Appendix~\ref{app:framework}, the modes obtained after applying a filter, followed by an inverse QFT and subsequent measurement of the amplitudes,
can be factorized as
\begin{equation}
    \pg(m)=\frac{1}{\psucc}\sum_k W_m(k)\rho_m(k)\,,
    \label{eq:master_id}
\end{equation}
where $W_m(k) \equiv g(k)g^*(k-m)$, and 
\begin{equation}
    \psucc = \sum_k\abs{g(k)}^2\abs{\ak}^2\,.
    \label{eq:psucc}
\end{equation}

Equation~\eqref{eq:master_id} reveals that the filter modifies the distribution via pairwise
spectral weights $W_m(k)$, the lag-$m$ \emph{overlap} of the filter with its own shifted copy.
These spectral weights do not act on the amplitudes directly but rather on their autocorrelation
$\rho_m(k)$.
The implications can be understood by reading Eq.~\eqref{eq:wiener_khinchin} against Eq.~\eqref{eq:master_id}:
$\rho_m(k)$ is the summand of the Wiener--Khinchin sum, and the
filter inserts the weight $W_m(k)$ \emph{inside} that sum. Because the insertion happens under the
sum and not in front of it, the induced action on $\hat p$ is an average of $W_m$ over the terms of the
autocorrelation, rather than the pointwise multiplication of Eq.~\eqref{eq:convolution_filter}.

In the following, we
call a lag \emph{regular} if $\hat{p}(m)\neq 0$ and \emph{degenerate} if $\hat{p}(m)= 0$, and we collect
the regular lags in the set
\begin{equation}
    \mathcal{S} \equiv \lbrace m : \hat{p}(m)\neq 0\rbrace\,.
    \label{eq:regular_lags}
\end{equation}
We can use the previous definitions to write down the coherent equivalent of Eq.~\eqref{eq:convolution_filter}
for regular lags $m\in\mathcal{S}$
\begin{equation}
    \hat{p}_g(m) = \frac{\langle W_m\rangle_{\rho_m}}{\psucc}\hat{p}(m)\,.
    \label{eq:coherent_filter}
\end{equation}
Here, $\wm=\sum_k W_m(k)\rho_m(k)/(\sum_k\rho_m(k))=\sum_k [W_m(k)\rho_m(k)]/\hat{p}(m)$ is the $\rho_m$-weighted average
of the pair weight $W_m(k)$, the second equality being Eq.~\eqref{eq:wiener_khinchin}. Comparing Eq.~\eqref{eq:convolution_filter} to Eq.~\eqref{eq:coherent_filter} reveals two important distinctions between coherent manipulation of the
spectrum and filtering after measurement in classical post-processing: classical post-processing modifies the spectrum by a multiplication with a \emph{real} scalar, whereas
$\wm$ in Eq.~\eqref{eq:master_id} is \emph{complex}, which allows for additional modification of the phases of $\hat{p}(m)$.
The appearance of the normalization $\psucc$ in Eq.~\eqref{eq:psucc} has an operational interpretation: spectral attenuation (e.g., a low-pass filter)
requires $\abs{g(k)}<1$ for some $k$. This is a non-unitary operation that requires ancillary measurements with post-selection
on measurement outcomes with success probability $\psucc$ (see Appendix~\ref{app:framework}). As a consequence, compared to sampling the non-filtered distribution with a given measurement
budget of $N_{\rm shots}$, filtering with $\abs{g(k)}<1$ and subsequent sampling results in only $\psucc N_{\rm shots}$ effective samples. Note that more sophisticated schemes such as
 amplitude amplification~\cite{lomonaco2002quantum} could increase the efficiency of the post-selection. To make the following analysis independent of the specific post-selection method,
 we define an affordability threshold $\pstar$ in the following discussions (cf. Sec.~\ref{sec:compact}). 

\section{Results}
\label{sec:results}

These two differences, a complex rather than real reweighting and a post-selection
cost, are the entire gap between the coherent filter and its classical counterpart. The
rest of the paper is intended to make the implications of this difference concrete: first, whether the complex reweighting ever
produces an output that no classical smoothing can reach, and second, whether any
such advantage survives once its cost and spectral complexity are accounted for.

\subsection{A Classicality Criterion}

\begin{figure*}[t]
    \centering
    \includegraphics[width=\textwidth]{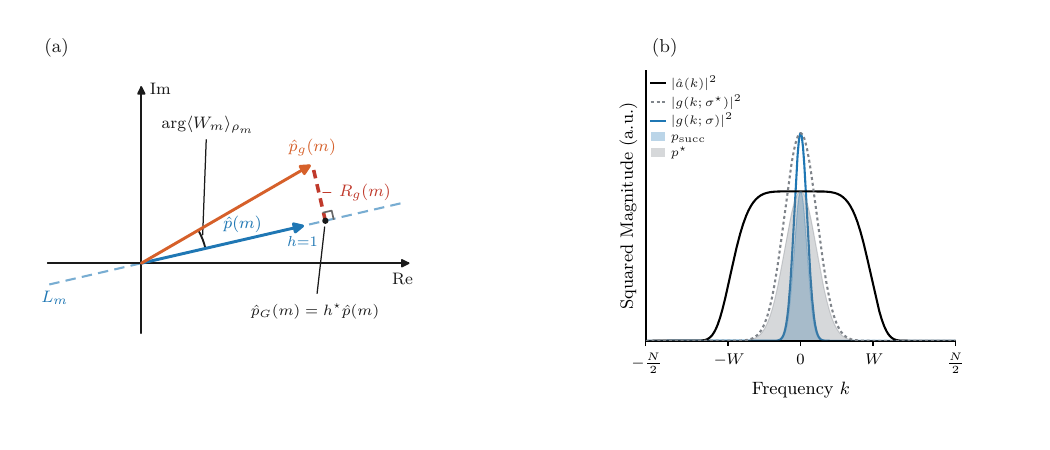}
    \caption{(a) Geometry of the coherent residual $R_g$ [Eq.~\eqref{eq:coherent_residual}] at a fixed lag $m$. For a given input distribution $\hat{p}(m)$,
    the coherent filter can rotate the modes $p_g(m)$ [Eq.~\eqref{eq:master_id}], while classical smoothing can only rescale each mode by the real factor $\hat{K}(m)$,
     resulting in $\hat{p}_G(m)$ [Eq.~\eqref{eq:convolution_filter}]. All
    modes reachable by classical smoothing are thus on a line $L_m=\lbrace h\hat{p}(m):h\in\mathbb{R}\rbrace$ through the origin.
    The best classical value $h^\star(m)\hat{p}(m)$ is the orthogonal projection of $\hat{p}_g(m)$ onto the line $L_m$. The coherent residual $R_g$ is the perpendicular component, which cannot be removed by
    a kernel. 
    The rotation of $\hat{p}_g(m)$ off $L_m$ is set by $\arg\langle W_m\rangle_{\rho_m}$, with $R_g(m)\propto\mathrm{Im}\langle W_m\rangle_{\rho_m}$.
    (b) Relationship between the filter Fourier transformed modes $\hat{a}(k)$, the filter $g(k;\sigma)$ and the relevant probabilities $\psucc$ and $p^\star$.
    In this example, we have chosen a low-pass filter $|g(k;\sigma)|^2=\exp(-k^2/\sigma^2)$ (blue, solid) applied to a flat-topped spectrum of half-width $W$, (black). Filtering the spectrum retains the fraction
    $p_{\mathrm{succ}}(\sigma)$ [Eq.~\eqref{eq:psucc}] of the unit
    spectral mass $|\hat a(k)|^2$, drawn as the shaded area under the window (blue).
    The affordable width $\sigma^\star$ [Eq.~\eqref{eq:sigmastar}] is obtained as the smallest $\sigma$ with
    $p_{\mathrm{succ}}(\sigma^\star)=p^\star$. The resulting filter is shown
    in grey (dashed). Its window then captures exactly $p^\star$
    (grey area). In the shown example $\sigma^\star>\sigma$, such that $p_{\mathrm{succ}}<\pstar$ such that the filter is classified as unaffordable.}
    \label{fig:simulability_criterion}
\end{figure*}

To determine whether the additional freedom granted by the coherent filter has operational
consequences, we introduce the gap
\begin{equation}
    \Phi_g \;\equiv\; \inf_{K\in\mathcal{K}}\ \text{TV}\!\big(p\star K,\ p_g\big)\,,
    \label{eq:gap}
\end{equation}
the smallest total-variation distance, $\text{TV}(z,z')=\tfrac12\sum_x\abs{z(x)-z'(x)}$, between the
coherently filtered distribution $\pg$ and any classical smoothing $G_K$ of the unfiltered
samples. The minimization is carried out over the space $\mathcal{K}$ of symmetric probability kernels. In the following, we assume a large sampling limit
in which shot noise is not a dominant factor that decides whether $\Phi_g$ vanishes or not. In this limit,
we have $\Phi_g=0$ exactly when classical smoothing reproduces the coherent output. In this case
the filter confers no advantage. A positive gap, however, witnesses an output no classical smoothing of
the samples can reach. 

To identify the regimes with $\Phi_g > 0$, we consider the coherent residual
\begin{equation}
    R_g(m) = \hat{p}_g(m)-h^\star(m)\hat{p}(m)\,,
    \label{eq:coherent_residual}
\end{equation}
i.e., the difference between the modes obtained through a coherent filter and classical post-processing. Here $\Phi_g$ is the operational quantity, deciding whether a classical
smoothing exists, while $R_g(m)$ is the per-lag diagnostic that certifies a
gap and attributes it to the input phases. A nonvanishing residual is
sufficient but not necessary for $\Phi_g>0$, and a residual vanishing at
every lag is necessary but not sufficient for $\Phi_g=0$.
Here, we define $h^\star(m)=\arg\min_{h\in\mathbb{R}}\abs{\hat{p}_g(m)-h\hat{p}(m)}^2$ on regular lags $m\in\mathcal{S}$
as the classical weight
that minimizes the squared difference between the coherently filtered mode and the mode obtained
through classical post-processing. On degenerate lags, we define $h^\star(m)\hat{p}(m)\equiv 0$. As derived in Appendix~\ref{app:criterion}, the minimum
can be computed in closed form, which for $m\in\mathcal{S}$ yields
\begin{equation}
    \label{eq:residual_final_form}
      R_g(m)=i\,\mathrm{Im}\langle W_m\rangle_{\rho_m}\,\frac{\hat p(m)}{\psucc}\,.
\end{equation}
From this we determine two cases where the coherent residual can vanish.
First, deterministically, when the $\rho_m$-weighted average $\langle W_m\rangle_{\rho_m}$ is real. Second, statistically, when the input is spectrally incoherent: there $\text{Im}\wm$ concentrates at zero over the random phases and the phase-charged residual vanishes lag by lag in mean square at fixed affordability
$\psucc \geq \pstar$. We formalize the first case in the following proposition and return to the second case later.
\begin{proposition}[Simulability criterion]\label{prop:criterion}
The coherent filter is exactly classical smoothing and $\Phi_g=0$, if and only if two independent
conditions both hold:
\begin{enumerate}
  \item[(a)] \emph{Phase alignment.} $R_g(m)=0$ at every lag. Equivalently $\wm\in\mathbb{R}$ at
every regular lag and $\pg(m)=0$ at every degenerate lag.
  \item[(b)] \emph{Envelope realizability.} The real envelope
  $h^\star(m)=\mathrm{Re}\langle W_m\rangle_{\rho_m}/\psucc$ is the transfer function of a symmetric probability kernel for
  regular lags, i.e. there exists $K\geq 0$ with
  $\sum_x K(x)=1$ and $\hat{K}(m)=h^\star(m)\,,\forall m\in\mathcal{S}$.
\end{enumerate}
\end{proposition}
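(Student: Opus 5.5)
The proof reduces both directions to an equality of Fourier coefficients. On the finite cyclic group $\mathbb{Z}_N$ the unnormalized DFT is injective, and total variation is a norm. Hence $\Phi_g=0$ holds iff some $K\in\mathcal{K}$ attains $p\star K=p_g$, and this is equivalent to $\hat K(m)\hat p(m)=\hat p_g(m)$ for every $m$. One point needs care: the infimum in Eq.~\eqref{eq:gap} must actually be attained. The set $\mathcal{K}$ of symmetric probability kernels is a compact simplex in $\mathbb{R}^N$, and $K\mapsto\mathrm{TV}(p\star K,p_g)$ is continuous, so the infimum is a minimum. Therefore $\Phi_g=0$ iff
\begin{equation*}
\exists K\in\mathcal{K}:\quad \hat K(m)\hat p(m)=\hat p_g(m)\quad\forall m .
\end{equation*}
The plan is then to split this mode condition into regular lags $m\in\mathcal{S}$ and degenerate lags, and to match each piece with (a) or (b).

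\emph{Necessity.} Assume such a $K$ exists. At a degenerate lag the left side is $\hat K(m)\cdot 0=0$, so $\hat p_g(m)=0$. At a regular lag, $\hat K(m)$ is real because $K$ is symmetric. Then $\hat p_g(m)$ lies on the line $L_m=\{h\hat p(m):h\in\mathbb{R}\}$, so the real least-squares problem defining $h^\star(m)$ reaches zero and $R_g(m)=0$. By Eq.~\eqref{eq:residual_final_form}, with $\hat p(m)\neq0$ and $\psucc>0$, this forces $\mathrm{Im}\wm=0$, which is (a).

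I would also verify the closed form of $h^\star$. Starting from Eq.~\eqref{eq:coherent_filter}, $\hat p_g(m)=(\wm/\psucc)\hat p(m)$, and projecting onto $L_m$ gives
\begin{equation*}
h^\star(m)=\mathrm{Re}\!\left[\hat p_g(m)\hat p^*(m)\right]/\abs{\hat p(m)}^2=\mathrm{Re}\wm/\psucc .
\end{equation*}
Once the imaginary part vanishes, $\hat K(m)=\hat p_g(m)/\hat p(m)=h^\star(m)$ on $\mathcal{S}$, and $K$ is a symmetric probability kernel. This is (b).

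\emph{Sufficiency.} Take the $K$ supplied by (b). On $\mathcal{S}$, condition (a) gives $\wm$ real, so $\hat p_g(m)=(\wm/\psucc)\hat p(m)=h^\star(m)\hat p(m)=\hat K(m)\hat p(m)$. On degenerate lags both sides vanish, the left by (a) and the right because $\hat p(m)=0$. Equality of all Fourier modes gives $p\star K=p_g$, so $\Phi_g=0$. The equivalence inside (a) also needs a check: "$R_g=0$ at every lag" against "real average on $\mathcal{S}$ plus $\hat p_g=0$ on degenerate lags". On degenerate lags the convention $h^\star\hat p\equiv0$ gives $R_g(m)=\hat p_g(m)$ directly, and on regular lags the equivalence is Eq.~\eqref{eq:residual_final_form}.

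\emph{Independence.} The statement calls the two conditions independent, so I would give one example violating each alone. For (b) failing with (a) holding, take a real, even $g$ and a spectrum with $\hat a(-k)=\hat a(k)^*$; then $\wm$ can be real, yet $h^\star$ can fail positive-definiteness, for instance $h^\star(m)>1=h^\star(0)$ at some lag. For (a) failing alone, take a generic complex phase input. This construction is the main obstacle. The cleanest route is probably $N$ small, e.g. $N=4$, where the positivity of $K$ can be checked by hand. The essential content of the proof, however, is the Fourier bookkeeping above.
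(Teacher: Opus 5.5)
Your proof of the equivalence is correct and follows essentially the paper's route: compactness of $\mathcal{K}$ gives attainment, and exact reproduction then splits lag by lag into the part of $\hat p_g(m)$ orthogonal to $\hat p(m)$ (condition (a)) and the real envelope match (condition (b)). The paper reaches the same split through $\|\cdot\|_2$, Parseval and the identity $|\hat p_g(m)-\hat K(m)\hat p(m)|^2=|R_g(m)|^2+|h^\star(m)-\hat K(m)|^2|\hat p(m)|^2$, which it reuses for the bounds of Prop.~\ref{prop:mechanisms}, whereas you use DFT injectivity directly.

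One correction to your optional independence example. A Hermitian spectrum $\hat a(-k)=\hat a(k)^*$, i.e.\ real amplitudes, does not make $\langle W_m\rangle_{\rho_m}$ real when $p$ is asymmetric; the paper notes the certificate is then generically nonzero. You need a phase-trivial spectrum, e.g.\ real amplitudes with reflection-symmetric $p$. This is the paper's bimodal example, which has $R_g\equiv0$ yet $\Phi_g=0.039$.
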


Proposition~\ref{prop:criterion} locates the entire quantum content of a
spectral filter in a single scalar per lag, the $\rho_m$-averaged pair-weight
$\langle W_m\rangle_{\rho_m}$. The two conditions constrain different parts of $\wm$. Condition (a) constrains its \emph{imaginary} part: the classical side can rescale
$\hat p(m)$ but not rotate it, so by Eq.~\eqref{eq:residual_final_form}, the residual is the component of $\hat p_g(m)$ orthogonal to $\hat p(m)$, carried entirely by the input phases
through $\rho_m$.  Figure~\ref{fig:simulability_criterion} (a) visualizes the relationship between the filtered distributions and the coherent residual.

Condition (b) constrains its \emph{real}
part: a real even envelope normalized to $h^\star(0)=1$ need not arise from any nonnegative kernel,
and when it does not, $\Phi_g>0$ even though (a) holds and every residual vanishes. This is a purely
classical realizability constraint with no dependence on the phases. The elementary classical case of the low-pass filter (idealized constant on $\text{supp}\rho_m$)
satisfies both: a $\rho_m$-constant pair-weight $W_m(k)\equiv w(m)\in\mathbb{R}$ on the support of $\rho_m$ aligns every lag, and the
canonical low-pass profiles induce a positive-definite $h^\star$.

To turn Prop.~\ref{prop:criterion} into a classification of actual filters, we write the
pair-weight in polar form using $g(k)=\abs{g(k)}e^{i\theta(k)}$, which results in
\begin{equation}
    W_m(k)=\abs{g(k)}\,\abs{g(k-m)}\;e^{i[\theta(k)-\theta(k-m)]}\,.
    \label{eq:W_polar}
\end{equation}
Based on this decomposition, we define two archetypes: \emph{magnitude} filters ($\theta\equiv0$),
whose pair-weight is real, and \emph{phase} filters ($\abs{g}\equiv1$), whose
pair-weight has unit modulus.

A positive gap establishes separation from smoothing but not from classical simulation. Additionally, a useful quantum filter should be affordable and its output
should not be captured by, e.g., efficient Fourier analysis. In the following, we discuss both the affordability and the
spectral complexity of magnitude and phase filters.

\subsection{Affordability and Spectral Compactness}
\label{sec:compact}

To make the following discussion more concrete, we focus on filters that straightforwardly
achieve a high-frequency regularization as discussed in Sec.~\ref{sec:spectral_filtering}: one-parameter \emph{magnitude-filter families}
$g_\sigma$, with low-pass profiles in which a single scale $\sigma>0$ sets how
aggressively high frequencies are attenuated. We call $\sigma$ the
\emph{passband scale}, where the \emph{passband} is the band of frequencies
transmitted with appreciable weight. We have $\abs{g_\sigma(k)}$  of order unity for
$\abs{k}\lesssim\sigma$ and $|g_\sigma(k)|$ decaying for $|k|\gtrsim\sigma$. As discussed in Sec.~\ref{sec:spectral_filtering}, magnitude filters with $\abs{g(k)}<1$ result in a reduced success probability $\psucc < 1$.\footnote{In contrast, pure phase filters with unit norm $\abs{g(k)}=1$, $\psucc=1$ such that the filtering operation does not incur an additional sampling cost.}
As this implicitly imposes a cost on the number of measurements needed to create samples, we compare filters only at a fixed
affordability floor $\psucc\ge\pstar$, in the following denoted as \emph{matched-cost constraint}, which puts the quantum
filter and its classical competitor on the same sampling budget. For the filtering families $g_\sigma$, we define 
the \emph{affordable passband scale}
\begin{equation}\label{eq:sigmastar}
  \sigma^\star(n)\;=\;\min\{\,\sigma\ \colon\ \psucc(\sigma)\ge\pstar\,\}\,.
\end{equation}
It is the smallest $\sigma$ with $\psucc\ge \pstar$. A visualization of the involved quantities is shown in Fig.~\ref{fig:simulability_criterion} (b).

Given a family of filters $g_\sigma$ as defined above, we now show that the mode count of the filtered output is set by
the passband scale and not by the grid size $N=2^n$. Whether this results in a
classically substitutable object is decided by how fast $\sigma$ must grow with $n$ for the filter to remain
affordable. To this end, we define the smallest number of modes required to reproduce $\pg$ to accuracy $\epsilon$
as 
\begin{equation}\label{eq:modecount}
  M^\star(p_g;\epsilon)\;=\;\min\bigl\{\,M\ \colon\
  \mathrm{TV}\bigl(p_g,\,p_g^{\le M}\bigr)\le\epsilon\,\bigr\}\,.
\end{equation}
Here, we have introduced the filtered output truncated to its $2M+1$ lowest frequencies
\begin{equation}\label{eq:truncation}
  p_g^{\le M}(x)\;=\;\frac{1}{N}\sum_{\abs{m}\le M}\hat p_g(m)\,\omega^{mx}\,.
\end{equation}
Using these definitions, we can state a bound on $M^\star$ formalized in the following Lemma.
\begin{lemma}[Spectral compactness collapse]\label{lem:compact}
Let $\{g_\sigma\}$ be a family with passband scale $\sigma$, i.e., a filter family whose
overlap obeys $\sup_k\big(\abs{g_\sigma(k)}\,\abs{g_\sigma(k-m)}\big)\le
s(\abs{m}/\sigma)$ for all $m$, with a fixed, non-increasing, square-integrable overlap profile
$s:[0,\infty)\to[0,1]$. Then for
every input state with $\psucc \geq \pstar$, the filtered output $p_g$ can be truncated to its $\abs{m}\le
M^\star$ Fourier modes with total-variation error at most $\epsilon$, where
\begin{equation*}
  M^\star(p_g;\epsilon)\;\le\;\sigma\,T^{-1}\!\Big(\frac{2(\pstar\epsilon)^2}{\sigma}\Big),
  \qquad T(a):=\int_a^\infty s(u)^2\,du,
\end{equation*}
is a bound independent of the grid size $N=2^n$.
\end{lemma}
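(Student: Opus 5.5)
The plan is to pass from total variation to an $\ell^2$ bound on the discarded Fourier tail via Cauchy--Schwarz and Parseval, then control each discarded mode pointwise through Eq.~\eqref{eq:master_id} and the overlap profile $s$. The grid size should cancel exactly in the $\ell^1\to\ell^2$ step; that cancellation is what makes the bound independent of $N$.

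\emph{Step 1 (TV to $\ell^2$).} Write $f=p_g-p_g^{\le M}$. Its unnormalized Fourier coefficients are $\hat f(m)=\hat p_g(m)$ for $\abs{m}>M$ and zero otherwise. By Cauchy--Schwarz, $\sum_x\abs{f(x)}\le\sqrt{N}\,(\sum_x\abs{f(x)}^2)^{1/2}$. In the paper's convention, Parseval reads $\sum_x\abs{f(x)}^2=\frac1N\sum_m\abs{\hat f(m)}^2$. Hence
\begin{equation*}
\mathrm{TV}\bigl(p_g,p_g^{\le M}\bigr)\le\tfrac12\Bigl(\sum_{\abs{m}>M}\abs{\hat p_g(m)}^2\Bigr)^{1/2},
\end{equation*}
and no factor of $N$ survives.

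\emph{Step 2 (pointwise mode bound).} From Eq.~\eqref{eq:master_id}, using $\abs{W_m(k)}=\abs{g_\sigma(k)}\abs{g_\sigma(k-m)}\le s(\abs{m}/\sigma)$, I get
\begin{equation*}
\abs{\hat p_g(m)}\le\frac{s(\abs{m}/\sigma)}{\psucc}\sum_k\abs{\hat a(k)}\abs{\hat a(k-m)}\le\frac{s(\abs{m}/\sigma)}{\pstar}.
\end{equation*}
The second inequality uses Cauchy--Schwarz together with $\sum_k\abs{\ak}^2=1$ and the matched-cost floor $\psucc\ge\pstar$. This bound holds for every admissible input state, which gives the uniformity claimed in the statement.

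\emph{Step 3 (sum to integral, then invert).} Both signs of $m$ contribute equally. Since $s$ is non-increasing, $s(m/\sigma)^2\le\int_{m-1}^{m}s(t/\sigma)^2\,dt$, so
\begin{equation*}
\sum_{\abs{m}>M}\abs{\hat p_g(m)}^2\le\frac{2}{(\pstar)^2}\sum_{m\ge M+1}s(m/\sigma)^2\le\frac{2\sigma}{(\pstar)^2}\,T(M/\sigma).
\end{equation*}
On a finite grid only finitely many $m$ appear, which only helps. Combining with Step 1 gives $\mathrm{TV}\le\bigl(\sigma T(M/\sigma)/2\bigr)^{1/2}/\pstar$. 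This is at most $\epsilon$ as soon as $T(M/\sigma)\le2(\pstar\epsilon)^2/\sigma$. Since $T$ is continuous and non-increasing, one may take $M=\sigma\,T^{-1}\bigl(2(\pstar\epsilon)^2/\sigma\bigr)$, and by the definition in Eq.~\eqref{eq:modecount} the minimal mode count $M^\star$ is no larger.

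The main care point is the inversion of $T$. Because $T$ need not be strictly decreasing and may vanish beyond some point for compactly supported $s$, I would interpret $T^{-1}(y)$ as the generalized inverse $\inf\{a\ge0:T(a)\le y\}$. If $y\ge T(0)$, then $M=0$ and the bound is trivial. One also has to round $M$ up to an integer, which costs at most one mode and is absorbed in the statement. Everything else is routine once the Parseval normalization in Step 1 is checked, so that the $\sqrt N$ from Cauchy--Schwarz cancels against the $1/N$ in Parseval.
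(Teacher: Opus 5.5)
Your proposal is correct and follows the paper's proof essentially step for step. The steps are: the pointwise bound $\abs{\hat p_g(m)}\le\sup_k\abs{W_m(k)}/\psucc$ via Cauchy--Schwarz on $\sum_k\abs{\rho_m(k)}$; the $\ell^1\to\ell^2$ conversion, in which $\sqrt N$ cancels against Parseval's $1/N$; the monotone sum-to-integral comparison giving $\mathrm{TV}\le\sqrt{\sigma T(M/\sigma)}/(\sqrt2\,\psucc)$; and inversion of $T$ using $\psucc\ge\pstar$. Your care with the generalized inverse and integer rounding is an improvement, but the one-mode rounding slack is not literally absorbed by the stated bound $M^\star\le\sigma T^{-1}(\cdot)$; the paper's own argument glosses over exactly the same point.
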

The proof is in Appendix~\ref{app:compact} and relies on bounding $\abs{\pg(m)}$ by
the supremum of the pair-weight $W_m(k)$ using Cauchy-Schwarz followed by converting the high-lag
tail into a TV bound using Parseval's theorem. The bound for $M^\star$ is made explicit for the example
of a Gaussian low-pass filter in Sec.~\ref{sec:gaussian}. 

How many modes are needed to approximate the filtered
distribution to an error $\epsilon$ is governed entirely by how fast the maximum pair weight decays with the lag $m$. For a
family $g_\sigma$ this decay is not arbitrary: enlarging $\sigma$ simply stretches the
filter along the frequency axis, so the overlap at lag $m$ and passband scale $\sigma$ depends only
on the ratio $m/\sigma$. In the hypothesis of Lemma~\ref{lem:compact} this is captured by the overlap-profile $s$ as a single, $\sigma$-independent
quantity which is the universal shape of the overlap when lag is
measured in units of the passband scale.
This factors out $\sigma$ and isolates the scale from the shape.

From Lemma~\ref{lem:compact} it is apparent that a magnitude filter with a bounded passband scale ($\sigma=O(1)$) produces an
 $O(1)$-mode output uniformly in $n$, a constant-size classical object. A pure phase filter ($|g|\equiv1$) has overlap profile $s\equiv1$, which is not square-integrable. Consequently, the bound does not apply, there is no passband restriction, and the output spectrum can be non-compact.

Widening the passband raises the
success probability Eq.~\eqref{eq:psucc}, since
more spectral mass is retained, but it also widens the overlap width, i.e., the range
of lags $m$ on which $g_\sigma(k)$ and
$g_\sigma(k-m)$ are simultaneously appreciable. Affordability therefore pushes
$\sigma$ up while compactness pushes it down. We can make this tension more concrete for
the important class of broadband spectra. Call a family of inputs indexed by $n$
\emph{broadband} if its spectral mass escapes every fixed window, i.e.
\begin{equation}\label{eq:broadband}
  \sum_{\abs{k}\le L}\abs{\ak}^2\;\to\;0\quad(n\to\infty)\,,
\end{equation}
for each fixed L. Splitting the sum Eq.~\eqref{eq:psucc} at $\abs{k}=L$, and using $\abs{g}\le1$ inside the window and
$\sum_k\abs{\ak}^2=1$ outside it, gives
\begin{equation}
  \psucc\;\le\;\sum_{\abs{k}\le L}\abs{\ak}^{2}
  \;+\;\sup_{\abs{k}>L}\abs{g(k)}^{2}\,.
  \label{eq:psucc_split}
\end{equation}
The first term is all the filter can collect from the fixed window, and on a broadband input it
vanishes: the retained mass must come from outside the window. Affordability $\psucc\ge\pstar$
therefore forces
\begin{equation}
  \sup_{\abs{k}>L}\abs{g(k)}^{2}\;\ge\;\pstar-o(1)\,,
  \label{eq:no_attenuation}
\end{equation}
for every fixed $L$. That is, however wide the window, some frequency beyond it is still transmitted with amplitude
$\sqrt{\pstar}$ up to $o(1)$. An affordable magnitude filter on a broadband input can thus not be
a low-pass with an $n$-independent cutoff: suppressing the whole tail $\abs{k}>L$ to $o(1)$ at
fixed $L$ sends $\psucc\to0$. For a filter family $g_\sigma$ the same conclusion is read off
$\sigma^\star(n)\to\infty$ directly, since a diverging passband scale eventually covers any fixed
window. Regularization and affordability are competing for 
broadband spectra.

\subsection{Dichotomy for Magnitude Filters}
\label{sec:dichotomy}
Two independent mechanisms are now in place: Lemma~\ref{lem:compact} bounds the \emph{size} of
the filtered output through $\abs{g}$ alone, and the coherent residual Eq.~\eqref{eq:residual_final_form} fixes the
\emph{direction} in which that output leaves the classical set. In this section, we combine these two
with the affordability threshold set by the affordable scale $\sigma^\star(n)$ of Eq.~\eqref{eq:sigmastar},
to establish a classification for magnitude filters.

To this end, it is useful to observe that the gap $\Phi_g$ is lower bounded by the coherent residual as
\begin{equation}
  \Phi_g\;\ge\;C_g\;\equiv\;\tfrac12\max_{m\in\mathcal S}\abs{R_g(m)}\,.
  \label{eq:gap_lower}
\end{equation}
We call the quantity $C_g$ the \emph{phase certificate}, the derivation is shown in (Appendix~\ref{app:dichotomy}) and requires no
hypothesis on $g$, on the input, or on the realizability of any envelope.

For a magnitude filter $W_m$ is real, so by Eq.~\eqref{eq:residual_final_form} the certificate is
nonzero only if $\rho_m(k)=\ak\hat a^*(k-m)$ carries phase. It is therefore a statement about the
phases of the input states rather than about the filter. We thus turn our attention to
the input phases.
\begin{definition}[Input phases]\label{def:incoherent}
Fix a magnitude-filter family at matched cost. A family of inputs indexed by $n$ is
\begin{itemize}
  \item \emph{phase-trivial} if $\ak=e^{i\alpha}\omega^{tk}r_k$ for some $\alpha\in\mathbb{R}$,
  $t\in\tfrac12\mathbb{Z}_N$ and real $r_k$, i.e. if the spectrum is real up to a global phase and a
  (possibly half-integer) translation of the state. 
  \item \emph{spectrally incoherent} if $a_x=\sqrt{p(x)}\,e^{i\varphi_x}$ with $\varphi_x$
  independent and uniform.
  \item \emph{spectrally coherent} if $C_g\not\to0$.
\end{itemize}
\end{definition}
We stress that phase-triviality constrains $\arg\ak$ and not $\arg a_x$: computational-basis phase
erasure is not spectral phase erasure. Real nonnegative amplitudes only make the spectrum Hermitian,
$\hat a(-k)=\hat a(k)^*$, which is real up to a translation precisely when $p$ is
reflection-symmetric. For an asymmetric $p$ the certificate is generically nonzero even though the
state carries no computational-basis phase at all. Using these definitions, we formulate the
following proposition.
\begin{proposition}[Phases and magnitudes at matched cost]\label{prop:mechanisms}
Fix an accuracy $\epsilon>0$, an affordability threshold $\pstar\in(0,1)$, and a magnitude-filter
family $\{g_\sigma\}$ [$\theta\equiv0$ in Eq.~\eqref{eq:W_polar}] with an overlap profile in the
sense of Lemma~\ref{lem:compact}. Let $\sigma^\star(n)$ be the affordable passband scale
Eq.~\eqref{eq:sigmastar} and let $p_g$ be the output of $g_{\sigma^\star(n)}$.
\begin{itemize}
  \item[(i)] $\sigma^\star(n)$ is a functional of $\abs{\ak}$ alone, and
  the mode count bound Eq.~\eqref{eq:modecount} is 
  independent of $N=2^n$ and of the input phases. In particular $\sigma^\star(n)=O(1)$ forces
  $M^\star=O(1)$ uniformly in $n$.
  \item[(ii)] At every regular lag $m\in\mathcal S$,
  $R_g(m)\neq0\iff\mathrm{Im}\wm\neq0$, and $W_m$ being real, $\mathrm{Im}\wm$ is sourced solely by
  the phase of $\rho_m(k)$. Hence $C_g>0$ only if the input carries spectral phase, and $C_g=0$
  identically on phase-trivial inputs.
  \item[(iii)] If in addition the induced envelope
  $h^\star=\mathrm{Re}\wm/\psucc$ is realizable [condition~(b) of Prop.~\ref{prop:criterion}] and
  $\pg(m)=0$ at every degenerate lag, then
  \begin{equation}
    C_g\;\le\;\Phi_g\;\le\;\tfrac12\Big(\sum_{m\in\mathcal S}\abs{R_g(m)}^2\Big)^{\!1/2}\,,
    \label{eq:gap_sandwich}
  \end{equation}
  and in particular $\Phi_g=0$ if and only if $C_g=0$.
\end{itemize}
\end{proposition}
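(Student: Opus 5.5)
The proof proceeds item by item, each from an earlier identity.

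For (i), I would use Eq.~\eqref{eq:psucc}: for $g_\sigma$ it reads $\psucc(\sigma)=\sum_k\abs{g_\sigma(k)}^2\abs{\ak}^2$. It depends on the input only through $\abs{\ak}^2$, so $\sigma^\star(n)$ in Eq.~\eqref{eq:sigmastar} is a functional of $\abs{\ak}$ alone. The bound of Lemma~\ref{lem:compact}, evaluated at $\sigma=\sigma^\star(n)$, involves only $\sigma^\star$, $\pstar$, $\epsilon$ and the fixed profile $s$, and contains neither $N$ nor any phase. If $\sigma^\star(n)\le\bar\sigma$, I still need $M^\star=O(1)$ uniformly in $n$. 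The map $\sigma\mapsto\sigma T^{-1}(2(\pstar\epsilon)^2/\sigma)$ is nondecreasing in $\sigma$, because $T$ is decreasing and so $T^{-1}$ is nonincreasing in its argument. Hence it is bounded by its value at $\bar\sigma$. The hypothesis $\psucc\ge\pstar$ of the lemma holds by the definition of $\sigma^\star$.

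For (ii), I start from Eq.~\eqref{eq:residual_final_form}. On $m\in\mathcal S$ we have $\hat p(m)\neq0$ and $\psucc>0$, so $R_g(m)=0\iff\mathrm{Im}\wm=0$. For a magnitude filter, $W_m(k)=\abs{g(k)}\abs{g(k-m)}\ge0$ is real. Writing $\wm=\sum_kW_m(k)\rho_m(k)/\hat p(m)$ gives
\[
\mathrm{Im}\wm=\mathrm{Im}\Big[\tfrac{1}{\hat p(m)}\sum_kW_m(k)\rho_m(k)\Big],
\]
so the imaginary part comes only from the phases of $\rho_m(k)$ relative to $\hat p(m)=\sum_k\rho_m(k)$. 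For phase-trivial inputs I substitute $\ak=e^{i\alpha}\omega^{tk}r_k$. Then $\rho_m(k)=\omega^{tm}r_kr_{k-m}$, and the common phase $\omega^{tm}$ factors out of both numerator and denominator. This leaves a real ratio, so $R_g\equiv0$ on $\mathcal S$. Degenerate lags contribute nothing to $C_g$ by definition, since the max runs over $\mathcal S$. Hence $C_g=0$.

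For (iii), the lower bound is Eq.~\eqref{eq:gap_lower}, which I take as given. For the upper bound, I choose the kernel $K$ realizing $h^\star$ on $\mathcal S$ via condition (b). On regular lags, $\widehat{p\star K}(m)=h^\star(m)\hat p(m)$. On degenerate lags, $\widehat{p\star K}(m)=\hat K(m)\cdot0=0=\pg(m)$ by hypothesis. So the difference $\delta=p_g-p\star K$ has Fourier coefficients $R_g(m)$ on $\mathcal S$ and zero elsewhere. Then I apply Cauchy--Schwarz and Parseval,
\[
\mathrm{TV}=\tfrac12\norm{\delta}_1\le\tfrac12\sqrt N\norm{\delta}_2=\tfrac12\Big(\sum_m\abs{\hat\delta(m)}^2\Big)^{1/2},
\]
using $\norm{\delta}_2^2=\frac1N\sum_m\abs{\hat\delta(m)}^2$ in the unnormalized convention. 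This gives Eq.~\eqref{eq:gap_sandwich}, since $\Phi_g$ is an infimum. The equivalence follows at once: $C_g=0$ forces every $R_g(m)=0$ on $\mathcal S$, which kills the upper bound, and conversely $\Phi_g=0$ forces $C_g=0$ through the lower bound.

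The main obstacle is the monotonicity and uniformity claim in (i), together with checking the Parseval constant in (iii). Both are bookkeeping rather than conceptual steps, and I would verify the convention factors carefully.
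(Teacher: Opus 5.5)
Your proposal is correct and follows the paper's proof essentially step by step: (i) via the magnitude-only dependence of $\psucc$ and the monotonicity of $\sigma\,T^{-1}(2(\pstar\epsilon)^2/\sigma)$, (ii) via Eq.~\eqref{eq:residual_final_form} with a real window and the collinearity of $\rho_m(k)$ with $\hat p(m)$ on phase-trivial inputs, and (iii) via the kernel realizing $h^\star$ together with Parseval and $\norm{f}_1\le\sqrt N\norm{f}_2$. The only difference is cosmetic: in (iii) you read off the Fourier coefficients of $p_g-p\star K$ directly, whereas the paper invokes the orthogonal decomposition Eq.~\eqref{eq:two_term_vanish_classicality}; this is the same computation.
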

The three parts are proved in Appendix~\ref{app:dichotomy}: (i) is Lemma~\ref{lem:compact}
evaluated at $\sigma^\star$, (ii) follows from Eq.~\eqref{eq:residual_final_form} together with
the reality of $W_m$, and (iii) evaluates the exact $L^2$ decomposition of
Appendix~\ref{app:criterion} at the kernel realizing $h^\star$. Using this proposition, we arrive at the
following classification for magnitude filters.
\begin{theorem}[Dichotomy]\label{thm:dichotomy}
Assume the hypotheses of Prop.~\ref{prop:mechanisms}. Every family of inputs falls into
exactly one of the following two cells, according to whether the affordable passband scale
$\sigma^\star(n)$ remains bounded.
\begin{itemize}
\item[(I)] \emph{Classically reproducible.} If $\sigma^\star(n)=O(1)$, then
  $M^\star(p_g;\epsilon)=O(1)$ uniformly in $n$ and for arbitrary input phases, and $p_g$ is
  reproduced to $\mathrm{TV}\le2\epsilon$ by a distribution samplable in $\mathrm{poly}(n)$ time
  from those $O(1)$ Fourier coefficients. The model is classically simulable regardless of the
  values of $C_g$ and $\Phi_g$.
\item[(II)] \emph{No affordable cutoff.} If $\sigma^\star(n)$ is unbounded, then by
  Eq.~\eqref{eq:no_attenuation} the filter cannot implement a cutoff at any fixed frequency and
  hence does not smooth. Distinguish two cases (a) \emph{certified}, if $C_g\not\to0$, the separation
  is carried by the spectral phase of
  the input. (b) \emph{uncertified}, if $C_g\to0$ then no single-lag Fourier observable
  separates $p_g$ from classical smoothing. 
\end{itemize}
In neither cell is a separation attributable to the filter, which only contributes the real window
$W_m$ and the post-selection cost $\psucc$.
\end{theorem}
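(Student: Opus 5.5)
The plan is to prove the theorem cell by cell. The split into two cells is exhaustive and exclusive simply because a sequence $\sigma^\star(n)$ is either bounded or unbounded. By Prop.~\ref{prop:mechanisms}(i), $\sigma^\star(n)$ depends only on $\abs{\ak}$, so the cell assignment is a property of the magnitude spectrum alone. This is what justifies the phrase ``for arbitrary input phases'' in cell (I).

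\textbf{Cell (I).} Suppose $\sigma^\star(n)\le\bar\sigma$ for all $n$. I will apply Lemma~\ref{lem:compact} at $\sigma=\sigma^\star(n)$. Since $\psucc\ge\pstar$ by the definition of $\sigma^\star$, this gives
\[
M^\star\le\sigma^\star\,T^{-1}\bigl(2(\pstar\epsilon)^2/\sigma^\star\bigr).
\]
The map $\sigma\mapsto\sigma T^{-1}(c/\sigma)$ is nondecreasing in $\sigma$, because $T$ is nonincreasing. So the right-hand side is at most $\bar\sigma\,T^{-1}(2(\pstar\epsilon)^2/\bar\sigma)=O(1)$.

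Next I build a classical sampler. It takes the $2M^\star+1$ coefficients $\pg(m)$ with $\abs{m}\le M^\star$ and forms the truncation $p_g^{\le M^\star}$, which is within TV distance $\epsilon$ of $p_g$. This truncation may be negative or unnormalized. I therefore pass to its clipped and renormalized version $q\propto\max(p_g^{\le M^\star},0)$. A standard argument shows that $\mathrm{TV}(q,p_g)\le 2\,\mathrm{TV}(p_g^{\le M^\star},p_g)\le2\epsilon$: the negative mass is at most $\epsilon$ and the normalization is off by at most $\epsilon$, which together account for the factor $2$.

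To sample $q$ in $\mathrm{poly}(n)$ time, I use that $p_g^{\le M}$ is a trigonometric polynomial of degree $O(1)$. Its prefix sums $\sum_{x<y}p_g^{\le M}(x)$ therefore have closed forms as geometric sums, and inverse-CDF sampling by binary search over $n$ bits follows.

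The main obstacle is that the clipping destroys these closed forms. My first fallback is rejection sampling. The proposal is a mixture, with fixed $O(1)$ weights, of the $O(1)$ Fourier basis components, each of which can be sampled uniformly via its bounded envelope $\abs{\pg(m)}/N$. A sample is then accepted with probability proportional to $\max(p^{\le M},0)$ divided by that envelope. The expected acceptance rate is bounded below by a constant, since the envelope's total mass is at most $\sum_{\abs m\le M}\abs{\pg(m)}\le 2M+1$ while $q$ has mass at least $1-\epsilon$.

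I will record that nothing in this argument refers to $C_g$ or $\Phi_g$.

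\textbf{Cell (II).} Suppose $\sigma^\star(n)$ is unbounded. Along a subsequence with $\sigma^\star(n)\to\infty$, Eq.~\eqref{eq:no_attenuation} holds in its family form. For every fixed $L$ there is a $\abs{k}>L$ with $\abs{g(k)}^2\ge\pstar-o(1)$, so no fixed frequency band is suppressed, and I take this as the formal content of ``does not smooth.''

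For subcase (a), I use Eq.~\eqref{eq:gap_lower}: if $C_g\not\to0$ then $\Phi_g\ge C_g\not\to0$. Prop.~\ref{prop:mechanisms}(ii) says that $\mathrm{Im}\wm$ comes only from $\arg\rho_m$, since $W_m$ is real, so the separation is attributable to the input phase. For subcase (b), $C_g\to0$ means $\max_{m\in\mathcal S}\abs{R_g(m)}\to0$. By Eq.~\eqref{eq:residual_final_form}, the single-lag observable $\pg(m)$ then differs from its best classical value $h^\star(m)\hat p(m)$ by $o(1)$ at every lag. Note that $h^\star(m)\hat p(m)$ is achievable per lag, although not necessarily jointly.

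\textbf{Closing remark.} In all cases the filter enters only through the real weight $W_m$, via $\mathrm{Re}\wm$ and the envelope, and through $\psucc$. This follows directly from Eq.~\eqref{eq:W_polar} with $\theta\equiv0$.
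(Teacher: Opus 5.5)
Your proposal is correct and has the same skeleton as the paper's proof. It uses the bounded/unbounded split, and for cell~(I) it applies Lemma~\ref{lem:compact} at $\sigma^\star(n)$ together with the monotonicity of $\sigma\mapsto\sigma T^{-1}(c/\sigma)$. For branch~(a) it uses $\Phi_g\ge C_g$ with Prop.~\ref{prop:mechanisms}(ii), and for branch~(b) it restates $\max_{m\in\mathcal S}\abs{R_g(m)}\to0$.

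The genuine difference is the sampler in cell~(I). The paper keeps inverse-transform sampling and repairs the clipping by noting that a real trigonometric polynomial of degree $M^\star$ changes sign at most $2M^\star$ times. The clipped region is therefore $O(1)$ arcs, whose endpoints are computed numerically, and the prefix sums are again closed-form geometric sums. Your rejection sampler avoids this bookkeeping. Every mode has constant modulus $\abs{\pg(m)}/N$, so your mixture proposal is simply the uniform distribution on $\mathbb{Z}_N$ with the constant envelope $E=N^{-1}\sum_{\abs{m}\le M^\star}\abs{\pg(m)}\ge\max(p_g^{\le M^\star},0)$. The acceptance probability is $Z/\sum_{\abs{m}\le M^\star}\abs{\pg(m)}\ge1/(2M^\star+1)$, because $Z\ge1$ and $\abs{\pg(m)}\le1$. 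This route needs only pointwise evaluation of the truncation, with no root finding, and it returns samples of $q$ itself. The price is a running time bounded in expectation rather than in the worst case. It also extends verbatim to any finite abelian group, with acceptance at least the inverse of the number of retained coefficients. Incidentally, the truncation is automatically normalized, since it contains $\pg(0)=1$, so only negativity needs repair.

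There are two smaller points. First, in cell~(II), Eq.~\eqref{eq:no_attenuation} was derived from \emph{broadband} input, which an unbounded $\sigma^\star$ does not imply. The paper instead uses the stretched form $\abs{g_\sigma(k)}=G(\abs{k}/\sigma)$ assumed in App.~\ref{app:dichotomy} to get $\abs{g_{\sigma^\star}(k)}\to1$ at every fixed $k$. Your ``family form'' should be justified the same way, because the overlap hypothesis of Lemma~\ref{lem:compact} only bounds $\abs{g}$ from above.

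Second, in branch~(b), drop the aside that $h^\star(m)\hat p(m)$ is achievable lag by lag. The envelope $h^\star(m)=\mathrm{Re}\wm/\psucc$ can exceed $1$ in modulus, while $\abs{\hat K(m)}\le1$ for every kernel. In addition, degenerate lags with $\pg(m)\neq0$ are not seen by $C_g$ at all. These are exactly the channels behind Eq.~\eqref{eq:real_channel_bound}, a single-lag witness that survives $C_g\to0$. So branch~(b) holds only in the sense the paper proves it: no regular lag carries a residual orthogonal to $\hat p(m)$. It does not hold in the stronger sense that every single-lag observable is matched by smoothing.
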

The theorem asserts that at matched cost a magnitude filter either returns a constant-size
classical object or stops smoothing altogether, and in neither case does it create the
separation, it only inherits whatever phase the input already carried. The cell is decided by
$\sigma^\star(n)$, a functional of the magnitudes $\abs{\ak}$ alone, the branch by $C_g$, a
functional of the phases $\arg\rho_m$ alone (Prop.~\ref{prop:mechanisms}). A gap surviving in
cell~(I) is self-undermining: it places $p_g$ outside the smoothing class $\{p\star K\}$, but so
does the constant-size sampler that reproduces $p_g$. The split at $\sigma^\star = O(1)$ is
chosen for simplicity and is not the boundary of classical simulability. If the overlap
profile decays polynomially, the bound of Lemma~\ref{lem:compact}
gives $M^\star = \mathrm{poly}(n)$ whenever $\sigma^\star(n) = \mathrm{poly}(n)$,
so the cell~(I) construction extends verbatim to that larger class and cell~(II)
is correspondingly smaller. We state the bounded case, which is the one the examples
of Sec.~\ref{sec:examples} realize, and note that nothing in the argument turns
on the distinction.

For cell (II), in branch~(a) the certificate propagates to the gap, $\Phi_g\ge C_g$ by
Eq.~\eqref{eq:gap_lower}, and by Prop.~\ref{prop:mechanisms}~(ii) it requires
$\mathrm{Im}\wm\neq0$ at a regular lag, which the real window $W_m$ cannot supply. Branch~(b)
constrains the witness rather than the gap. $C_g$ is an $\ell^\infty$ quantity in the lag
whereas $\Phi_g$ is an $\ell^1$ one, so a residual of size $1/\sqrt{N}$ at each of $\Theta(N)$
lags contributes to $\Phi_g$ while $C_g\to0$, and the phase-free channels of
Prop.~\ref{prop:mechanisms}~(iii) remain open, detected without any phase hypothesis by
\begin{equation}
  \Phi_g\;\ge\;\tfrac12\,\max_m\big(\abs{\pg(m)}-\abs{\hat p(m)}\big)\,,
  \label{eq:real_channel_bound}
\end{equation}
valid since $\vert\hat K(m)\vert\le1$ for every probability kernel. A gap witnessed only this way
places $p_g$ outside the smoothing class but attests nothing a classical sampler could not do.
 Under the hypotheses of Prop.~\ref{prop:mechanisms}~(iii) both
channels close and the classification by $C_g$ is one by $\Phi_g$ at fixed $n$. We return to
both points in Sec.~\ref{sec:examples}.

\subsection{Transfer to the hypercube}
\label{sec:hypercube}
The classification in Sec.~\ref{sec:dichotomy} is aimed at filters operating on $\mathbb{Z}_N$.
However, the treatment uses the cyclic structure of the group only through character
orthogonality, the convolution theorem, Parseval's theorem, and Bochner's characterization of
realizable envelopes, all of which hold on any finite abelian group, so the framework transfers
with the dual group supplying the frequencies and the group difference replacing the lag $k-m$.
The one exception is Lemma~\ref{lem:compact}, whose mode count orders the lags along a line; on
the hypercube its role is played by a degree bound (Appendix~\ref{app:hypercube}). We sketch the
instance relevant to the didactic example that motivates this
work~\cite{belis2026spectralmethodscrucialmachine}, the hypercube $\mathbb{Z}_2^n$.

Consider a spectrum $\ak=2^{-n/2}\sum_x a_x(-1)^{k\cdot x}$ reached by a layer of Hadamard gates.
This is the Fourier transform on $\mathbb{Z}_2^n$. The equivalent of the 
frequency in the $\mathbb{Z}_N$ case is the Hamming weight $\abs{k}$, and the lag difference is the bitwise XOR
$k\oplus m$. We use a low-pass, $g_\eta(k)=(1-2\eta)^{\abs{k}}$ with flip rate
$0\le\eta\le 1/2$. In the taxonomy of Eq.~\eqref{eq:W_polar} this is a magnitude filter and its
pair-weight factorizes
\begin{equation}
  W_m(k)=\underbrace{(1-2\eta)^{\abs{m}}}_{=\,\hat K_\eta(m)}\;
  \underbrace{\big[(1-2\eta)^{2}\big]^{\abs{k\setminus m}}}_{\text{window}}\,,
  \label{eq:bitflip_fact}
\end{equation}
where $k$ and $m$ are identified with their supports in $[n]$, so $\abs{k\setminus m}$ counts
the ones of $k$ at positions where $m$ vanishes. The lag factor
is the transfer function of the i.i.d.\ bit-flip kernel
$K_\eta(z)=\eta^{\abs{z}}(1-\eta)^{n-\abs{z}}$~\cite{odonnell2014}.
That this kernel reproduces the filter applied to the
\emph{distribution} is already noted in
Ref.~\cite[Sec.~II\,E]{belis2026spectralmethodscrucialmachine}. The coherent route differs from
it only through the window inside Eq.~\eqref{eq:master_id}.

Two group properties then decide the classification. First, the phase channel is closed: the
characters are real, so $\hat p(m)$ and $\pg(m)$ are real for every input and every diagonal
filter, hence so is $\wm$, and $R_g(m)=0$ at every regular lag (Lemma~\ref{lem:real_wm} in
Appendix~\ref{app:hypercube}). The certificate is thus identically zero, here not only for
magnitude filters but for phase filters as well. As a result, the certified branch of cell~(II)
is empty, and any surviving gap runs through the phase-free channels of the uncertified branch
[cf.~Eq.~\eqref{eq:real_channel_bound}]. Second, matched cost is set by the weights
$A_w=\sum_{\abs{k}=w}\abs{\ak}^2$ through $\psucc=\sum_w(1-2\eta)^{2w}A_w$, so the affordable
flip rate is $\eta^\star(n)=\max\{\eta:\psucc(\eta)\ge\pstar\}$, the counterpart of the
affordable passband scale with small $\eta$ playing the role of large $\sigma$. A low-degree
input ($A_w=0$ for $w>w_0$ with $w_0=O(1)$) affords a constant flip rate and confines the output
to degree $\abs{m}\le2w_0$, which is a $\mathrm{poly}(n)$-coefficient object samplable in
$\mathrm{poly}(n)$ time. This is cell~(I) with ``$O(1)$ modes'' replaced by ``$O(1)$ degree'',
with the caveat that the confinement is inherited from the input, which already restricts the
unfiltered $\hat p$ to the same degrees, rather than imposed by the filter. A broadband input,
whose spectral mass escapes every fixed degree, instead forces the affordable flip rate to
vanish, $\eta^\star(n)\to0$, so no fixed degree is attenuated. This is the analogue of a
diverging $\sigma^\star(n)$.

On these two input classes, a magnitude filter on $\mathbb{Z}_2^n$ at matched cost is therefore
either a low-degree classical object or no smoothing at any fixed degree, with no phase branch
in either case, and whatever escapes the kernel surrogate is inherited from the input rather
than produced by the filter. Genuinely quantum spectral operations must consequently be phase
filters ($\abs{g}\equiv1$), the direction Ref.~\cite{belis2026spectralmethodscrucialmachine}
itself points to in suggesting a deterministic phase mask on the amplitudes. These incur no
post-selection cost, attenuate no degree here, and escape the compactness bound of
Lemma~\ref{lem:compact} on $\mathbb{Z}_N$. On the cube the certificate is blind to them as well,
so what such a filter buys there would have to be established by something other than $C_g$.

\section{Examples}
\label{sec:examples}

In this section, we probe the implications of the previous sections. In particular, we show that a spectrally incoherent input results
in a coherent residual that vanishes in expectation at every lag, while the gap itself survives as phase speckle. As a canonical example of a low-pass filter, we specify the properties of a Gaussian
magnitude filter and analyse a collection of quantum Born machines with random parameters and contrast them by trained counterparts.

\subsection{Spectrally incoherent input}
\label{sec:washout}
In the following,
we investigate the canonical example of an ensemble with i.i.d. random phases. In particular, consider a magnitude filter ($W_m$ real)
with input states $a_x=\sqrt{p(x)}\exp(i\varphi_x)$ with $\varphi_x$
independent and uniform. The spectral distribution of the input is flat in expectation, $\mathbb{E}[\abs{\hat{a}}^2]=1/N$
such that $\psucc\approx\sum_k\abs{g_\sigma(k)}^2/N$. The only way $\psucc$ can increase to reduce the cost is by widening the passband.
This qualifies the scenario as an example for the growing passband scale case.

As shown in Appendix~\ref{app:washout}, the coherent residual Eq.~\eqref{eq:coherent_residual} vanishes in expectation (w.r.t the phases)
and thus any deviation of the residual is a fluctuation around this vanishing mean. The fluctuations can be shown to be bounded by the
largest probability $p_{\rm max}=\text{max}_xp(x)$ such that at every regular lag $m$
\begin{equation}
  \mathbb{E}\big[\,\abs{R_g(m)}^{2}\,\big]\;\le\;\frac{p_{\max}}{\psucc^{2}}\,.
  \label{eq:washout_bound}
\end{equation}
The bound is therefore informative for input families whose magnitude profile delocalizes,
$p_{\rm max}\to0$, which we assume in the following. Note that this is a hypothesis on the
profile $p(x)$ and not a consequence of the flat expected spectrum
$\mathbb{E}[\abs{\hat a}^2]=1/N$, which the random phases produce for any $p$.
With $\psucc\ge \pstar$ fixed, the phase-charged residual then washes out asymptotically,
$R_g(m)\to 0$ in mean square at every regular lag. In particular, note that the vanishing
coherent residual is not caused by a vanishing
success probability since a smaller $\psucc$ would result in a larger $R_g$ [cf. Eq.~\eqref{eq:residual_final_form}].
The vanishing residual is thus caused by a vanishing $\mathrm{Im}\wm$ through phase concentration at fixed affordability. We
 stress that the same incoherent ensemble fed to a phase filter does not wash out: there $W_m$ is
complex, $R_g\not\to0$, and the certificate survives at every lag. From Eq.~\eqref{eq:washout_bound} we see that the certificate Eq.~\eqref{eq:gap_lower} vanishes and the family occupies the uncertified branch of
cell~(II). The bound Eq.~\eqref{eq:washout_bound} is per lag, however, and does not control
$\Phi_g$ (Appendix~\ref{app:washout}). 
\subsection{Gaussian low-pass}\label{sec:gaussian}
\begin{figure*}[t]
    \centering
    \includegraphics[width=0.82\textwidth]{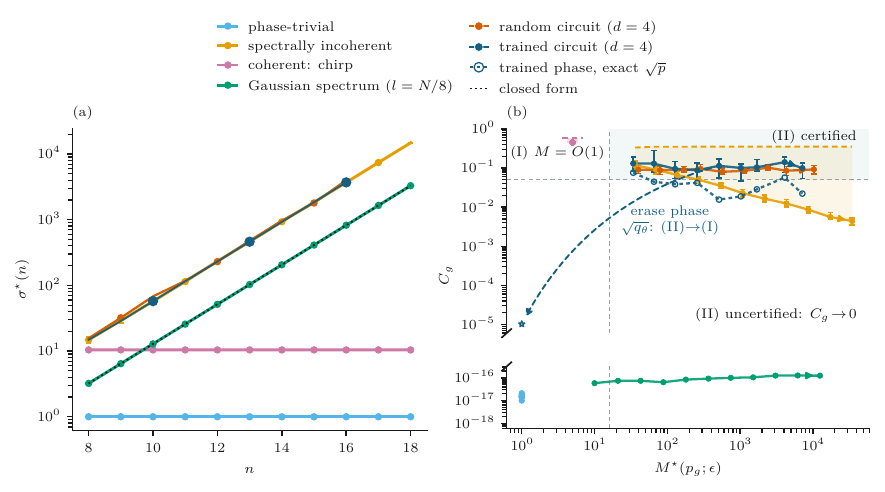}
  \caption{Input families for a Gaussian low-pass filter Eq.~\eqref{eq:gaussian_g} at
  the affordable passband scale $\sigma^\star(n)$ [Eq.~\eqref{eq:sigmastar}] at $\pstar=0.1$.
  (a)~The affordable passband scale for the constructed families and quantum circuits [cf. Sec.~\ref{sec:quantum_generative_models}]
  with random and with trained parameters. The closed form solution for Gaussian input spectra Eq.~\eqref{eq:gaussian_passband}
  is shown as dotted (black). (b) The phase certificate $C_g$
  [Eq.~\eqref{eq:gap_lower}] as a function of the modes $M^\star$
  [Eq.~\eqref{eq:modecount}] required to achieve an error of $\epsilon=0.05$.
  Each family is a track parametrised by $n$. The arrows point
  towards large $n$. For the coherent chirp and the spectrally incoherent input, the Gaussian-kernel upper
  bound on $\Phi_g$ is shown as a dashed line with the enclosed bracket indicated as shaded areas.
  The spectrally incoherent family and the circuits are shown as median over seeds (solid) and the error bars are the interquartile range.
  The visual classification into cells is
  schematic and not rigorous.
  The dashed arrow (dark blue) marks the phase erasure control experiment and the dotted track with the hollow markers is the complementary experiment
  described in the main text.}
  \label{fig:gaussian_low_pass}
\end{figure*}

To make the statements in Sec.~\ref{sec:results} concrete, we now fix a family of magnitude filters. As a prototypical example, we
consider a low-pass filter with a Gaussian profile
\begin{equation}
  g(k)=\exp(-k^2/2\sigma^2)\,,\label{eq:gaussian_g}
\end{equation}
such that
\begin{equation}
    W_m(k)=\exp(-\frac{m^2}{4\sigma^2})\exp(-\frac{(k-\frac{m}{2})^2}{\sigma^2})\,,\label{eq:gaussian_wk}
\end{equation}
with $g, W_m\in\mathbb{R}$. A straightforward calculation (Appendix~\ref{app:gauss}) yields the mode count Eq.~\eqref{eq:modecount}
\begin{equation}
    M^\star(\sigma;\epsilon)=O\!\big(\sigma\sqrt{\ln(1/\epsilon)}\big)\,.
    \label{eq:gaussian_mode_count}
\end{equation}
In particular, $M^\star$ depends on $n$ only through $\sigma$, consistent with the results of Sec.~\ref{sec:compact}.

For the Gaussian filter, the success probability Eq.~\eqref{eq:psucc} evaluates to
$\psucc(\sigma)=\sum_k\exp(-k^2/\sigma^2)\abs{\hat{a}(k)}^2$ increases for increasing $\sigma$.
The classification in terms of Theorem~\ref{thm:dichotomy} thus depends on the input spectrum.
We can make this concrete for a Gaussian input spectrum of width $l$ with magnitudes $\abs{\hat{a}(k)}^2= (1/l\sqrt{\pi})\exp(-k^2/l^2)$.
As shown in Appendix~\ref{app:gauss}, for sufficiently large $l$, we can turn the sum in $\psucc$ into an integral and
solve Eq.~\eqref{eq:sigmastar} for $\sigma^\star$, which is attained at $\psucc=\pstar$
\begin{equation}
    \sigma^\star = \frac{l \pstar}{\sqrt{1-p^{\star 2}}}\label{eq:gaussian_passband}
\end{equation}
From this, we can treat two separate cases. For broadband input, we have $l(n)\rightarrow\infty$
and thus $\sigma^\star\rightarrow\infty$. In this case the number of modes $M^\star$
is unbounded, placing the family in cell~(II) of Theorem~\ref{thm:dichotomy}, and the
branch is decided by the phases: a coherent input occupies the certified branch, a phase-trivial
input the uncertified branch with a vanishing certificate, and an incoherent input the uncertified branch
with its gap surviving as speckle (see previous section).
A narrow input spectrum, however, keeps $\sigma^\star$ and thus $M^\star$ bounded and the output
is a constant-size classical object according to cell~(I).

\subsection{Numerical Experiments}
\label{sec:numerics}

To gauge the validity of the analytical framework derived in Sec~\ref{sec:results}, we perform numerical 
experiments. 

\subsubsection{Controlled Input Families}
Figure~\ref{fig:gaussian_low_pass} (a) shows the affordable passband scale $\sigma^\star$ Eq.~\eqref{eq:sigmastar} of a Gaussian low-pass filter Eq.~\eqref{eq:gaussian_g} for a fixed
affordability threshold $\pstar = 0.1$ as a function
of the qubit number for different inputs. Three families share the same $n$-independent bimodal magnitude profile
$p(x)=\abs{a_x}^2$ as input. The profile is reflection-symmetric
about $x_0=0$ on the torus, $p(-x)=p(x)$. The families differ only in $\text{arg}\hat{a}(k)$: phase-trivial ($a_x=\sqrt{p(x)}$; light blue), spectrally incoherent (i.i.d. uniform
phases; yellow; median over $128$ seeds), and coherent chirp ($\arg a_x=\pi\gamma(x/N)^2$ with fixed $\gamma=48$; pink).  A fourth
family realizes the Gaussian input spectrum
$\abs{\hat a(k)}^2\propto e^{-k^2/l^2}$ with $l=N/8$ (green). 
It can be seen that the broad-band input of the Gaussian spectrum
and the spectrally incoherent inputs result in an unbounded $\sigma^\star\propto 2^n$, whereas the coherent
chirp and the phase-trivial input stay bounded. For the Gaussian spectrum the measured $\sigma^\star$ coincides with the closed form
Eq.~\eqref{eq:gaussian_passband} (dotted) to a relative deviation of
$10^{-8}$. 

Figure~\ref{fig:gaussian_low_pass} (b) shows the phase certificate $C_g$
Eq.~\eqref{eq:gap_lower} against the mode count $M^\star$ for each
family. The two axes are the two functionals Theorem~\ref{thm:dichotomy}. Here,
$M^\star$ reads the magnitudes through $\sigma^\star$, and $C_g$, a maximum over the regular
lags alone, reads the phases. To calculate
$M^\star$ we fix an approximation error of $\epsilon=0.05$. For
the spectrally incoherent family and the coherent chirp, we can in addition to the bounds Eq.~\eqref{eq:gap_lower} and Eq.~\eqref{eq:real_channel_bound},
obtain a Gaussian-kernel upper
bound by minimizing $\mathrm{TV}(p\star K_\sigma,p_g)$ over the
bandwidth $\sigma$ (dashed; for the chirp the minimizer is the degenerate $\sigma\to0$ kernel,
$p\star K=p$), which together with $C_g$ encloses the true gap $\Phi_g$.

Each family traces a short track as $n$ grows, and since cell membership is asymptotic it is
read off the direction of that track (arrows) rather than off any single point. The cell is
set by $\sigma^\star$. A track running right witnesses cell~(II),
but a bounded $M^\star$ is consistent with either cell, since Lemma~\ref{lem:compact} has no
converse. Within cell~(II) a height that holds marks the certified branch and one that falls
the uncertified one. Since $C_g\le\Phi_g$, a persistent height is also a persistent gap. 

Both the phase-trivial and the Gaussian-spectrum input have $R_g\equiv0$, so the shown
certificate is 0 at machine-precision. However, they land in different cells, since
membership is set by $\sigma^\star$. The phase-trivial input keeps $\sigma^\star$ bounded and
falls in cell~(I). Its empty certificate is not a vanishing gap as condition~(b) of
Prop.~\ref{prop:criterion} fails. To make this more concrete, we run a linear program over the full kernel class
$\mathcal{K}$ (Appendix~\ref{app:numerics}) which returns $\Phi_g=0.039$.
The Gaussian-spectrum input is
broadband, so $\sigma^\star\propto2^n$ places it in cell~(II), in the uncertified branch since
its real spectrum empties the certificate. There both conditions hold and the gap vanishes,
$\Phi_g=0$ up to the solver residual ($<10^{-9}$), so classical smoothing reproduces the output.

The incoherent track runs to the right and downwards. Its certificate decays
 as $2^{-0.49n}$ (fitted slope $-0.489\pm0.004$ over $8\le n\le18$, $128$ seeds), matching the
 $2^{-n/2}$ scale Eq.~\eqref{eq:washout_bound} sets, while the
 Gaussian-kernel upper bound for the
 same family plateaus. The best kernel recovers the
 envelope of $p_g$ and is left with its speckle. The family descends into the uncertified branch
 of cell~(II), its persistent gap carried by the phase-free channel. The chirp keeps $\sigma^\star$
bounded and is therefore in cell~(I), yet its gap persists, the
self-undermining case noted after Theorem~\ref{thm:dichotomy}.

\subsubsection{Quantum Born Machines}\label{sec:trained}
Having numerically tested the analytical regimes of Sec~\ref{sec:results}, we now analyze
distributions obtained from quantum circuits. To this end, we train a hardware-efficient
ansatz (depth-$d$ $R_yR_z$ rotation
layers with a CZ ring) to reproduce the fixed bimodal target $p(x)$ of
Fig.~\ref{fig:gaussian_low_pass} using the Kullback--Leibler divergence $\mathcal{L}$ of Sec.~\ref{sec:quantum_generative_models} as the loss function.
In particular, $\mathcal{L}$ only sees the distribution obtained through the Born rule $p_\theta(x)=\abs{a_x(\theta)}^2$
and is thus independent of any phases. We train circuits on 32 random parameter initializations with a depth of $d=4$. In Fig.~\ref{fig:gaussian_low_pass}, the untrained circuits are shown in orange
and the trained circuits in dark blue.

In Figure~\ref{fig:gaussian_low_pass} (a), both trained and untrained circuits have an
exponentially growing affordable passband scale $\sigma^\star$, which places both families in
cell~(II) of Theorem~\ref{thm:dichotomy}. The regular-lag certificate $C_g$ of the trained circuits is flat in $n$ (fitted slope $-0.04\pm0.02$). Hence $C_g\not\to0$ and the circuits occupy the \emph{certified}
branch of cell~(II). The mode count inherits the $\propto2^n$ growth
of the broadband circuit spectrum, so the track runs to the right, while its height stays at the
level of the same circuit with random parameters, a persistent certified gap at a growing
$M^\star$. Training barely displaces the track. The classification is not an artifact of an imperfect fit. The depth-4 models
reproduce the target to $\mathrm{TV}\approx0.02$, flat in $n$ (median over seeds between $0.020$
and $0.025$ for $8\le n\le16$). The mechanism is the one anticipated in
Prop.~\ref{prop:mechanisms}: the loss is a functional of only $p_\theta$, so it fixes the spectral
phase only through the component the profile itself dictates and neither needs nor removes the rest,
which is inherited at the initialization. Trained circuits are therefore
coherent inputs in the sense of Def.~\ref{def:incoherent}, and the certified branch of cell~(II)
is reached generically rather than by construction.
 
To isolate which half of the input carries the certified branch, we perform two control
experiments that cut a trained $n=12$ state along the magnitude/phase seam. First, erasing
the phases yields the twin $\sqrt{p_\theta}$, with an identical output distribution. It
collapses into cell~(I) at the narrowest admissible passband, $\sigma^\star=M^\star=1$
[dashed arrow in Fig.~\ref{fig:gaussian_low_pass}~(b)], its certificate falling to $10^{-5}$
(nonzero only through the residual fit error breaking the target's reflection symmetry) and
its exact gap to $\Phi_g\approx0.038$ (flat in $n$ and evaluated at $n\le10$, the sizes at which
the linear program for this control is affordable; Appendix~\ref{app:numerics}), the phase-free
floor of the profile and an order of
magnitude below the $\approx0.31$ of the state it was built from. Second, the converse
surgery [dotted track, hollow markers]: keeping the trained phases but substituting the exact
target magnitudes, $a_x=\sqrt{p(x)}\exp(i\arg a_x(\theta))$, removes the residual misfit. The
remaining profile is reflection-symmetric and hence phase-trivial on its own. The gap survives
intact, $\Phi_g\approx0.31$, within $0.1\%$ of the trained value in the median, at a passband
scale and mode count indistinguishable from the trained state, so the track stays in
cell~(II). The two controls bracket the classification from either side. It is only decided 
by the phases.

This finding does not promote the magnitude filter into a quantum resource. By Eq.~\eqref{eq:gap}, $\Phi_g$ is the distance of the
filter's output from its purpose of smoothing, so for a magnitude filter \emph{working as smoothing} and
\emph{having a gap} are mutually exclusive. On the trained models the filter does not smooth. The
filtered output lands at $\mathrm{TV}\approx0.31$ from the target the model was trained to
represent. Moreover, it lands fifteen times further from that target than the unfiltered model
does, at ten times the sampling cost, and the gap scatters
across identically trained seeds ($0.14$--$0.55$ at $n=12$), reflecting the phase configuration
the initialization happened to supply rather than any structure related to the task. Both readings of what the filter is for lead to the same place. If the goal is smoothing, the
filter delivers it only in the regime $\Phi_g\to0$ which is the regime in which a
classical kernel applied to the unfiltered samples delivers it as well, and does so without
paying the post-selection cost. If the goal is instead the filtered distribution $p_g$ in its
own right, then claiming an advantage means claiming that $p_g$ is hard to sample classically.
That is a statement about the input state family, which can be prepared and sampled without ever
applying the filter. The magnitude filter adds only the real window $W_m$ and the cost $\psucc$.
Either way, the quantum candidate is the input state, not the filter.

\section{Discussion and Conclusion}
\label{sec:discussion}
In this work we studied spectral filtering of generative quantum machine learning models. 
Spectral filtering is proposed as a regularizer, its job is
to smooth the output distribution of a generative model without destroying whatever makes that model
worth running on a quantum device. We have investigated whether it can do both, and whether the coherent
operation produces anything a classical competitor with the same interface cannot. We made the
comparison explicit for quantum circuit Born machines, whose output distributions are blind to the
phases of the amplitudes that generate them, measuring coherent filtering against convolution of the
unfiltered samples with a symmetric probability kernel at matched sampling cost. For magnitude
filters, the coherent counterpart of classical smoothing, the two requirements are in tension.
At a fixed affordability threshold, either the
filtered output is already a classical object, reproducible by smoothing the unfiltered samples or
specified by a constant number of Fourier modes and efficiently samplable from them, or the filter provides
no attenuation at any fixed frequency: applied to the broadband spectra that trained circuits actually produce, it is
affordable only with a passband so wide that it attenuates nothing at any fixed frequency.
Regularization, affordability, and any separation from classical simulation cannot be achieved together.
Where a separation does survive the operation, it is not manufactured by the filter. The residual
that carries it is sourced entirely by the spectral phase of the input state, and replacing that
phase by a linear one while leaving the output distribution untouched, which for the
reflection-symmetric target studied here is simply erasing the computational-basis phases, collapses
a trained model into the classically reproducible cell. Numerical experiments on controlled input families and on trained and untrained
Born machines illustrate both cells, and on the trained circuits the point is visible without any
simulability argument at all, since the filtered model sits fifteen times further from its own
training target than the unfiltered one at ten times the sampling cost. 

Several limitations qualify these conclusions. We work out the cyclic group $\mathbb{Z}_N$ in
detail and only sketch the transfer to the hypercube $\mathbb{Z}_2^n$, even though every ingredient
we use is available on any finite abelian group. The analysis is restricted to filters diagonal in
the Fourier basis, so spectral operations that couple modes are outside its scope. The classical
competitor is convolution with a symmetric probability kernel, which is the operation the filter
advertises, rather than arbitrary classical post-processing of the samples. Dropping reflection
symmetry costs the competitor almost nothing on every instance we solve
(Appendix~\ref{app:numerics}), but a gap measured against $\mathcal{K}$ is a separation from
smoothing and not, on its own, from classical simulation. Finally, the gap Eq.~\eqref{eq:gap} is a
statement about distributions in the large-sampling limit. At a finite shot budget $\Phi_g$ must be
resolved against shot noise, at a rate itself reduced by the post-selection cost $\psucc$, and how
the classification degrades there we leave open.

What the dichotomy delineates is what is \emph{not} a quantum resource, which is also where it
points. Within the diagonal family, phase filters with unit modulus are the only spectral
operations left. They incur no post-selection cost, need not admit a kernel surrogate, and escape the
compactness bound of Lemma~\ref{lem:compact}. Whether they can be turned into a design principle
with a useful inductive bias, rather than merely an unsimulable one, is the natural next question.
A second direction follows from the mechanism of Sec.~\ref{sec:trained}: the spectral phases that
decide the classification are, on the circuits studied here, invisible to a Born-rule loss and are
inherited from the initialization. This is an empirical statement about trained circuits rather than
a structural one, since the component of $\arg\ak$ sourced by the asymmetry of the profile is a
functional of $p_\theta$ and hence is seen by such a loss; our controls show that component to be
negligible here, $10^{-5}$ against a gap of $0.31$.
Training objectives that see the phase, rather than only the output distribution,
would make that degree of freedom a design variable instead of an accident.

\begin{acknowledgements}
  This work has been supported by the Dieter Schwarz Foundation through
  the Fraunhofer Heilbronn Research and Innovation Center Applied Quantum AI.
   Large language models were used in the preparation of this
  manuscript for literature research, editorial work on the text, code prototyping,
  and as a verification device on the technical arguments. All results were derived
  and verified by the author, who takes full responsibility for the content.
\end{acknowledgements}

\appendix
\onecolumngrid

\section{Framework}
\label{app:framework}
This appendix supports the framework described in Sec.~\ref{sec:framework}

\subsection{Classical Competitor}
We define the classical competitor as smoothing operations induced by a symmetric probability kernel as follows.

\begin{definition}[Classical post-processing kernels]\label{def:kernels}
A \emph{probability kernel} on $\mathbb{Z}_N$ is a map $K:\mathbb{Z}_N\to\mathbb{R}$ with
$K(x)\ge0$ and $\sum_xK(x)=1$, acting on the model as $G_K:p\mapsto p\star K$. Operationally $G_K$ draws
$x\sim p$ and $u\sim K$ independently and returns $x+u\bmod N$, so it is classical post-processing of
the samples for \emph{every} such $K$. We write
\begin{equation}
  \begin{aligned}
    \mathcal{K}_+&\equiv\Big\{K\ \Big|\ K\ge0,\ \sum_xK(x)=1\Big\}\,,\\
    \mathcal{K}&\equiv\big\{K\in\mathcal{K}_+\ \big|\ K(-x)=K(x)\big\}\,,
  \end{aligned}
  \label{eq:kernel_classes}
\end{equation}
for the translation-equivariant post-processings and for their reflection-symmetric subclass,
the \emph{smoothing kernels}. Within $\mathcal{K}_+$ the transfer function $\hat K$ is real and 
even within $\mathcal{K}$.
\end{definition}
Reflection symmetry is what separates smoothing from shifting: the extreme points of
$\mathcal{K}_+$ are the deterministic translations $x\mapsto x+x_0$, and $\mathcal{K}$ retains only their
symmetric mixtures. We take $\mathcal{K}$ as the classical baseline throughout, since it is the class to
which the filter's advertised semantics Eq.~\eqref{eq:convolution_filter} belongs. Since $\mathcal{K}\subset\mathcal{K}_+$ gives
$\Phi_g\ge\Phi_g^+\equiv\inf_{K\in\mathcal{K}_+}\mathrm{TV}(p\star K,p_g)$, every classicality
conclusion drawn against $\mathcal{K}$ holds a fortiori against $\mathcal{K}_+$, whereas a positive
gap must be checked against the larger class. On all instances reported here the median
$\Phi_g^+/\Phi_g$ exceeds $0.93$ for every input family with a nonzero gap, and no instance
changes its cell assignment (Appendix~\ref{app:numerics}). This justifies the restriction to $\mathbb{K}$ in the main text.

\subsection{Derivation of Eq.~\eqref{eq:master_id}}
We use the inverse of Eq.~\eqref{eq:fourier_transf} to write the filtered amplitudes $a_g(x)=\tfrac{1}{\sqrt N}\sum_k \hat a_g(k)\,\omega^{kx}$.
The filtered modes then have the form
\begin{equation}
  \hat p_g(m)=\sum_x \abs{a_g(x)}^2\omega^{-mx}
  =\frac1N\sum_{k,k'}\hat a_g(k)\hat a^*_g(k')\sum_x \omega^{(k-k'-m)x}
  =\sum_k \hat a_g(k)\hat a^*_g(k-m)\,,\label{eq:master_id_derivation}
\end{equation}
where we have used $\sum_x\omega^{(k-k'-m)x}=N\,[\,k-k'-m\equiv 0\,]$ in the last equality. 

For a diagonal spectral filter as introduced in Sec.~\ref{sec:spectral_filtering}, normalized to
$\norm{g}_\infty=1$, the map $\hat a\mapsto g\hat a$ can be realized on a single ancilla. We write $D_g=\sum_k g(k)\ket k\bra k$ for the corresponding operator, diagonal in the Fourier basis. Since
$\abs{g(k)}\le1$, the filter admits a dilation
$V\ket{0}_a\ket{k}=g(k)\ket{0}_a\ket{k}+(1-\abs{g(k)}^2)^{1/2}\,\ket{1}_a\ket{k}$
which block-encodes the filter
\begin{equation}
  \big(\bra{0}_a\otimes I\big)\,V\,\big(\ket{0}_a\otimes I\big)=D_g\,,
  \label{eq:heralding}
\end{equation}
so the outcome $\ket{0}_a$ occurs with probability $\norm{g\hat a}^2=\psucc$
[Eq.~\eqref{eq:psucc}] and leaves the normalized amplitudes $\hat a_g(k)=g(k)\ak/\sqrt{\psucc}$.
This gives $\psucc$ its two readings, the retained spectral mass and the heralding probability,
with $\psucc=1$ exactly when $\abs{g(k)}=1$ on the support of $\hat a$. Repeating until success
costs $1/\psucc$ state preparations, which amplitude amplification~\cite{lomonaco2002quantum} reduces to
$O(1/\sqrt{\psucc})$. No result depends on which, since only $\psucc\ge\pstar$ at $n$-independent
$\pstar$ is used. Substituting $\hat a_g$ into Eq.~\eqref{eq:master_id_derivation} results in Eq.~\eqref{eq:master_id} with the definitions of $W_m(k)$ and $\rho_m(k)$ as given in the main text. 

\section{Simulability criterion}\label{app:criterion}
In this appendix, we derive the simulability criterion of Prop.~\ref{prop:criterion}. The coherent residual Eq.~\eqref{eq:coherent_residual}
vanishes if the classical filter matches the quantum filter.
Expanding $h^\star$ and identifying the zero point of its derivative gives
\begin{align}
&\partial_h \left(\abs{\hat{p}_g(m)}^2-2h\,\text{Re}(\hat{p}_g(m)\hat{p}^*(m))+h^2\abs{\hat{p}(m)}^2\right)=0\\
&\iff h^\star(m) = \text{Re}\left(\frac{\hat{p}_g(m)}{\hat{p}(m)}\right)\,.\label{eq:hstar}
\end{align}
Substituting this into Eq.~\eqref{eq:coherent_residual} results in Eq.~\eqref{eq:residual_final_form}.
Using this equation, we study the gap as defined in Eq.~\eqref{eq:gap}, which can be formulated
as $\Phi_g=\text{inf}_K\norm{p_g-p\star K}_1/2$. The kernel set $\mathcal{K}$ is a closed
and bounded, hence compact, subset of $\mathbb{R}^N$, and $K\mapsto\text{TV}(p\star K,p_g)$ is
continuous, so the infimum is attained and $\Phi_g=0$ holds exactly when some $K\in\mathcal{K}$
reproduces $p_g$. On the finite group $\mathbb{Z}_N$, the two norms are equivalent,
$\norm{f}_2\leq\norm{f}_1\leq\sqrt{N}\norm{f}_2$, so they vanish together and $\norm{p_g-p\star
K}_2^2$ characterizes $\Phi_g=0$ in both directions. The equivalence constant is $\sqrt N$; hence
the $L^2$ route says nothing quantitative or asymptotic about $\Phi_g$ itself (see the scope
paragraph of App.~\ref{app:washout}). Working in $L^2$
allows us to use Parseval's theorem with the convention
\begin{equation}
  \sum_x \abs{f(x)}^2=\frac{1}{N}\sum_m \abs{\hat{f}(m)}^2\,.
  \label{eq:parseval}
\end{equation}
This yields
\begin{equation}
  \norm{p_g-p\star K}_2^2 = \frac{1}{N}\sum_m\abs{\pg(m)-\widehat{p\star K}(m)}^2 = \frac{1}{N}\sum_m\abs{\pg(m)-\hat{K}(m)\hat{p}(m)}^2\,,
\end{equation}
where $\hat{K}(m)\in\mathbb{R}$ is real and even for symmetric kernels $K$, and varies with the lag
$m$; the scalar $h$ of Eq.~\eqref{eq:hstar} is the minimization variable at a fixed lag.
Rearranging Eq.~\eqref{eq:coherent_residual}
and subtracting $\hat{K}(m)\hat{p}(m)$ from both sides gives
\begin{equation}
  \pg(m) -\hat{K}(m)\hat{p}(m) = R_g(m) + \big(h^\star(m) - \hat{K}(m)\big)\hat{p}(m)\,,\label{eq:res_decomposition}
\end{equation}
such that
\begin{equation}
  \abs{\pg(m)-\hat{K}(m)\hat{p}(m)}^2 = \abs{R_g(m) + \big(h^\star(m) - \hat{K}(m)\big)\hat{p}(m)}^2 = \abs{R_g(m)}^2 + \abs{h^\star(m) - \hat{K}(m)}^2\abs{\hat{p}(m)}^2\,.\label{eq:quadratic_bound_on_rg}
\end{equation}
Here, the cross terms in the second equality vanish since the first term is an imaginary multiple of $\hat{p}(m)$ by Eq.~\eqref{eq:residual_final_form}, and the second term is a real multiple of $\hat{p}(m)$. We thus have
\begin{equation}
  \norm{p_g-p\star K}_2^2 = \frac{1}{N}\sum_m\abs{R_g(m)}^2 + \frac{1}{N}\sum_m\abs{h^\star(m) - \hat{K}(m)}^2\abs{\hat{p}(m)}^2\label{eq:two_term_vanish_classicality}\,.
\end{equation}
At a degenerate lag, the second term is absent, and the first contributes $\abs{\pg(m)}^2$ for every $K$. Hence, $\Phi_g=0$ requires $\pg(m)=0$
at every degenerate lag, which is the degenerate half of condition (a) of Prop.~\ref{prop:criterion}.
For regular lags $m\in\mathcal{S}$ [Eq.~\eqref{eq:regular_lags}], a vanishing gap
requires both terms to vanish. Since $\hat{p}(m)\neq0$ and $\psucc>0$, the first term
vanishes exactly when $\wm\in\mathbb{R}$ by Eq.~\eqref{eq:residual_final_form}, which is the regular
half of condition~(a) in both directions. The second term in Eq.~\eqref{eq:two_term_vanish_classicality} vanishes if $\hat{K}(m) = h^\star(m)$ at every $m\in\mathcal{S}$, i.e., if $h^\star$ can
be realized by a symmetric kernel transformation, which is condition (b) in Prop.~\ref{prop:criterion}. By Bochner's theorem on $\mathbb{Z}_N$
~\cite{rudin1960}, the realizable envelopes $\hat{K}$ of symmetric probability kernels are the symmetric positive-definite sequences normalized to $\hat{K}(0)=1$. A vanishing second term therefore forces $h^\star$
 to coincide with one such envelope on the regular lags, which is condition~(b).

Three properties now immediately follow. Since $\hat p(0)=\pg(0)=1$, Eq.~\eqref{eq:hstar} gives
$h^\star(0)=1$, so the Bochner normalization is not an extra assumption. Since $p$ and $p_g$ are
real, $\hat p(-m)=\hat p(m)^*$ and $\pg(-m)=\pg(m)^*$, so $h^\star$ is automatically real and even,
and the restriction to symmetric kernels in condition~(b) costs nothing. Finally, every probability
kernel obeys $\abs{\hat K(m)}\leq\sum_xK(x)=1$, so $\abs{h^\star(m)}\leq1$ is necessary, which is
the amplification channel of Sec.~\ref{sec:dichotomy}, appearing already here.

Collecting the two directions: if $\Phi_g=0$, both terms of
Eq.~\eqref{eq:two_term_vanish_classicality} vanish at the attaining $K$, which is condition~(a) at
every lag and condition~(b) on $\mathcal{S}$. Conversely, (a) and (b) provide a $K^\star\in\mathcal{K}$
with $\hat K^\star=h^\star$ on $\mathcal{S}$; both terms vanish, and
$\norm{p_g-p\star K^\star}_2=0$, hence $\Phi_g=0$.\qed

\section{Spectral Compactness}
\label{app:compact}

\emph{Proof of Lemma~\ref{lem:compact}.} The modulus of $W_m$ bounds the
Fourier support of $\hat{p}_g$, so the magnitude profile that sets how many modes are needed to describe
 the filtered distribution. 
For any filter $g$, we bound Eq.~\eqref{eq:master_id}
\begin{equation}\label{eq:perm}
  \abs{\pg}\;\le\tfrac1\psucc\sup_k|W_m(k)|\sum_k|\rho_m(k)|\le\;\frac{1}{\psucc}\,\sup_k|W_m(k)|
  \;=\;\frac{1}{\psucc}\,\sup_k\big(|g(k)|\,|g(k-m)|\big)\,.
\end{equation}
In the second inequality, we have used $\sum_k\abs{\rho_m(k)}=\sum_k|\hat a(k)||\hat a(k-m)|\le\|\hat a\|^2=1$ by Cauchy-Schwarz. Hence, keeping all frequencies with $|m|\le M$,
\begin{equation}\label{eq:tvtail}
  \text{TV}\!\big(p_g,\ p_g^{\le M}\big)=\tfrac12\|p_g-p_g^{\le M}\|_1\le\;\tfrac12\Big(\textstyle\sum_{|m|>M}|\hat p_g(m)|^2\Big)^{1/2}
  \;\le\;\frac{1}{2\,\psucc}\Big(\textstyle\sum_{|m|>M}\sup_k|W_m(k)|^2\Big)^{1/2}.
\end{equation}
Here we have used Parseval's theorem Eq.~\eqref{eq:parseval} to bound the total variation by the sum of absolutes of the Fourier transformed probabilities $\|f\|_1\le\sqrt{N}\norm{f}_2=(\sum_m|\hat f(m)|^2)^{1/2}$ for the first inequality and the second inequality follows from Eq.~\eqref{eq:perm}.

Substituting the hypothesis $\sup_k\big(\abs{g_\sigma(k)}\,\abs{g_\sigma(k-m)}\big)\le
s(\abs{m}/\sigma)$ from Lemma~\ref{lem:compact} into Eq.~\eqref{eq:tvtail}, we use that $s$ is non-increasing and replace the sum with an integral, we obtain
\begin{equation}
  \text{TV}(p_g,p_g^{\leq M})\leq\frac{\sqrt{\sigma\,T(M/\sigma)}}{\sqrt2\,\psucc}\,.
  \label{eq:T_bound}
\end{equation}
Here we have defined $T(a):=\int_a^\infty s(u)^2\,du$. We are interested in the number of modes $M^\star$ Eq.~\eqref{eq:modecount} required to approximate the distribution
$p_g$ by the truncated distribution $p_g^{\leq M}$ Eq.~\eqref{eq:truncation} to error $\epsilon$, i.e., $\text{TV}(p_g,p_g^{\leq M})\leq \epsilon$.
Substituting Eq.~\eqref{eq:T_bound} then yields
\begin{equation}
  M^*(\sigma;\epsilon) \leq \sigma\,T^{-1}\!\Big(\frac{2\,(\pstar\epsilon)^2}{\sigma}\Big)\,,
  \label{eq:m_star}
\end{equation}
where we have used $\psucc \geq \pstar$ such that the bound depends on the state only through $\psucc\ge \pstar$.\qed

The decay of $\hat p_g$ is controlled purely by the filter's overlap width ($\approx 2\sigma$ for families $g_\sigma$ as defined in the main text).

\section{Proofs of Prop.~\ref{prop:mechanisms} and Theorem~\ref{thm:dichotomy}}
\label{app:dichotomy}
Throughout this appendix $g$ is a magnitude filter, so that
$W_m(k)=\abs{g(k)}\abs{g(k-m)}\in\mathbb{R}_{\ge0}$, and $\{g_\sigma\}$ is a family with passband
scale $\sigma$ in the sense of Sec.~\ref{sec:compact}. Furthermore, we assume the \emph{stretched form}
$\abs{g_\sigma(k)}=G(\abs{k}/\sigma)$ for a fixed profile $G:[0,\infty)\to[0,1]$ that is
continuous, non-increasing, and normalized to $G(0)=1$; in particular, $\abs{g_\sigma(k)}$ is of
order unity for $\abs{k}\lesssim\sigma$ and decays for $\abs{k}\gtrsim\sigma$. Then,
$\psucc(\sigma)$ is non-decreasing in $\sigma$, so Eq.~\eqref{eq:sigmastar} is well posed.

\paragraph{Derivation of the phase certificate Eq.~\eqref{eq:gap_lower}.}
Fix $K\in\mathcal K$ and set $f=p_g-p\star K$. Linearity and the convolution theorem give
$\hat f(m)=\pg(m)-\hat K(m)\hat p(m)$, and $\abs{\omega^{-mx}}=1$ gives
$\abs{\hat f(m)}\le\norm{f}_1=2\,\mathrm{TV}(p_g,p\star K)$. At a regular lag the orthogonal
decomposition Eq.~\eqref{eq:quadratic_bound_on_rg} of App.~\ref{app:criterion} bounds the same
quantity from below,
\begin{equation*}
  \abs{\hat f(m)}^2=\abs{R_g(m)}^2+\abs{h^\star(m)-\hat K(m)}^2\abs{\hat p(m)}^2
  \;\ge\;\abs{R_g(m)}^2\,,
\end{equation*}
so $\mathrm{TV}(p_g,p\star K)\ge\tfrac12\abs{R_g(m)}$ for every $m\in\mathcal S$ and every
$K\in\mathcal K$. Maximizing over $m$ and taking the infimum over $\mathcal K$ yields
$\Phi_g\ge C_g$.

\begin{proof}[Proof of Prop.~\ref{prop:mechanisms}]
\emph{(i)} The success probability
Eq.~\eqref{eq:psucc} sees the input only through the spectral
magnitudes $\abs{\ak}$, hence so does $\sigma^\star(n)$ defined by Eq.~\eqref{eq:sigmastar}.
Evaluating the bound Eq.~\eqref{eq:m_star} at $\sigma=\sigma^\star(n)$, where $\psucc\ge\pstar$
holds by construction, gives $M^\star(p_g;\epsilon)\;\le\;\sigma^\star T^{-1}(2(\pstar\epsilon)^2/\sigma^\star)$
in which the filter appears only through the overlap profile $s$ of its modulus, the state only
through $\psucc\ge\pstar$, and the grid size $N=2^n$ not at all. Since $T$ is non-increasing, so
is $T^{-1}$, and the right-hand side is non-decreasing in $\sigma^\star$. At fixed $\epsilon$ and
$\pstar$ a bound $\sigma^\star(n)\le\sigma_0$ therefore gives
$M^\star\le\sigma_0T^{-1}\big(2(\pstar\epsilon)^2/\sigma_0\big)$ uniformly in $n$ and for
arbitrary input phases.

\emph{(ii)} At a regular lag $\hat p(m)\neq0$ and
$\psucc>0$, so $R_g(m)\neq0\iff\mathrm{Im}\wm\neq0$ by Eq.~\eqref{eq:residual_final_form}. For a magnitude filter $\theta\equiv0$ in
Eq.~\eqref{eq:W_polar} and $W_m(k)\in\mathbb{R}$, so that with
$\wm=\sum_kW_m(k)\rho_m(k)/\hat p(m)$,
\begin{equation}
  \mathrm{Im}\wm=\frac{1}{\abs{\hat p(m)}^2}\sum_k W_m(k)\,
  \mathrm{Im}\big[\rho_m(k)\,\hat p^*(m)\big]\,.
  \label{eq:im_source}
\end{equation}
The real weights rotate no term, so the imaginary part is sourced exclusively by the arguments of
$\rho_m(k)=\ak\hat a^*(k-m)$ relative to $\hat p(m)=\sum_k\rho_m(k)$, i.e.\ by the spectral phase
of the input. 

On a phase-trivial family, inserting $\ak=e^{i\alpha}\omega^{tk}r_k$ makes the collinearity explicit,
\begin{equation}
  \rho_m(k)=\ak\hat a^*(k-m)=e^{i\lambda m}\,r_kr_{k-m}\,,\qquad \lambda\equiv\frac{2\pi t}{N}\,,
  \label{eq:rho_collinear}
\end{equation}
so the global phase $\alpha$ cancels and every term carries the same $k$-independent phase
$e^{i\lambda m}$, which is exactly the hypothesis at work. Consequently
$\hat p(m)=e^{i\lambda m}\sum_kr_kr_{k-m}$ and the terms $\rho_m(k)$ are collinear with their own
sum, $\rho_m(k)\,\hat p^*(m)=r_kr_{k-m}\sum_{k'}r_{k'}r_{k'-m}\in\mathbb{R}$ for all $k$
at every regular lag, so every summand of
Eq.~\eqref{eq:im_source} vanishes, $\wm\in\mathbb{R}$, and hence $R_g\equiv0$ and $C_g=0$
for every $n$.

\emph{(iii)} The lower bound is the certificate bound above. For the upper bound,
condition~(b) provides $K^\star\in\mathcal K$ with $\hat K^\star=h^\star$ on $\mathcal S$. In the
exact decomposition Eq.~\eqref{eq:two_term_vanish_classicality} the envelope term then vanishes
on $\mathcal S$, and the degenerate lags contribute $\abs{\pg(m)}^2=0$ by hypothesis, so
$\norm{p_g-p\star K^\star}_2^2=\tfrac1N\sum_{m\in\mathcal S}\abs{R_g(m)}^2$. With
$\norm{f}_1\le\sqrt N\norm{f}_2$ and $\Phi_g\le\mathrm{TV}(p\star K^\star,p_g)$ this is the
right-hand side of Eq.~\eqref{eq:gap_sandwich}. Finally, $C_g=0$ forces $R_g(m)=0$ at every
$m\in\mathcal S$ and hence $\Phi_g=0$ by that upper bound, while $\Phi_g=0$ forces $C_g=0$ by the
lower bound.
\end{proof}

\begin{proof}[Proof of Theorem~\ref{thm:dichotomy}]
\emph{Exclusive and exhaustive.} Split on whether $\sigma^\star(n)$ is bounded, giving
cell~(I) or cell~(II); within cell~(II), split on whether $C_g\to0$, giving the certified
branch~(a) or the uncertified branch~(b).

\emph{Cell~(I).} Part~(i) of Prop~\ref{prop:mechanisms} at $\sigma^\star(n)=O(1)$ gives $M^\star(p_g;\epsilon)=O(1)$
uniformly in $n$ and for arbitrary input phases. The truncation $p_g^{\le M^\star}$ is real and normalized but need not be pointwise nonnegative.

Samplability: We expand $\pg(m)$ into real and imaginary parts, so the truncation Eq.~\eqref{eq:truncation} becomes

\begin{equation}
  p_g^{\le M}(x)=\frac1N\Big(1+2\sum_{m=1}^{M}\big[\mathrm{Re}\hat p_g(m)\cos\tfrac{2\pi mx}{N}-\mathrm{Im}\hat p_g(m)\sin\tfrac{2\pi mx}{N}\big]\Big)\,.
\end{equation}
Here we have used that $\pg(-m)=\pg(m)^*$ for real $\pg$ and the truncation window $\abs{m}<M$ is symmetric in $m$.
From this, it follows that $\pg^{\le M}$ is a real trigonometric polynomial, normalized by its
$m=0$ mode. Note that it can become negative below zero wherever $p_g$ is small, so it is only an approximation of distribution, not a
distribution. This can be resolved by clipping the negative part and renormalizing,
\begin{equation*}
  q=\frac{\max\big(p_g^{\le M^\star},0\big)}{Z}\,,\qquad
  Z=\sum_x\max\big(p_g^{\le M^\star}(x),0\big)\,,
\end{equation*}
Clipping sends each negative point to $0$, which
is nearer the nonnegative $p_g$ than where it was before, so it cannot increase the error,
$\norm{\max(p_g^{\le M^\star},0)-p_g}_1\le\norm{p_g^{\le M^\star}-p_g}_1\le2\epsilon$.
Since $p_g^{\le M^\star}$ already sums
to one, $Z-1$ is exactly the negative mass the clip removes, bounded by the same
discrepancy, $Z-1\le2\epsilon$. Renormalizing displaces the vector by
$\norm{q-\tilde p}_1=\abs{1-Z}$, so $\norm{q-p_g}_1\le4\epsilon$ and
$\mathrm{TV}(q,p_g)\le2\epsilon$.

The distribution $q$ is efficiently
samplable: partial sums of the $O(1)$ retained modes are geometric series in closed form, a real
trigonometric polynomial of degree $M^\star$ changes sign at most $2M^\star$ times, so the
clipped region is a union of $O(1)$ arcs computable to precision $\delta$ in
$\mathrm{poly}(n,\log(1/\delta))$ time, and inverse-transform sampling from $q$ follows.
The conclusion uses no property of $C_g$ or $\Phi_g$.

\emph{Cell~(II)} If $\sigma^\star(n)$ is unbounded then, by the stretching form of
the family, $\abs{g_{\sigma^\star}(k)}\to 1$ at every fixed $k$
along the subsequence realizing the divergence, hence
$W_m(k)\to1$ on every fixed set of frequencies and lags.

\emph{Branch~(a), certified.} By part~(ii), $C_g>0$ requires $\mathrm{Im}\wm\neq0$ at some
regular lag. The window $W_m$ is real and by Eq.~\eqref{eq:im_source} cannot supply it, so it is
carried by $\arg\rho_m$, that is by the spectral phase of the input. By Eq.~\eqref{eq:gap_lower}, $\Phi_g\ge C_g$, so $C_g\not\to0$ forces
$\Phi_g\not\to0$: the output stays bounded away from every classical smoothing, and the
separation is carried by the spectral phase of the input.

\emph{Branch~(b), uncertified.} $C_g\to0$ means $\max_{m\in\mathcal S}\abs{R_g(m)}\to0$: no
regular lag separates $p_g$ from the smoothing class. If $\Phi_g\to0$, then by
Eq.~\eqref{eq:gap} and the attainment of the infimum (App.~\ref{app:criterion}) there are
symmetric probability kernels $K_n$ with $\mathrm{TV}(p\star K_n,p_g)\to0$. The unfiltered samples reproduce $p_g$ in the limit of large $n$
at no post-selection cost. This is consistent with the absence of attenuation since a filter that tends to the identity on every fixed band
is matched by the trivial kernel $K=\delta_0$, and a kernel that smooths at a fixed
frequency would contradict Eq.~\eqref{eq:no_attenuation}.

If instead $\Phi_g\not\to0$, then along any subsequence with $\Phi_g$ bounded away from zero
either a hypothesis of part~(iii) of Prop.~\ref{prop:mechanisms} fails, i.e.\ $\pg(m)\neq0$ at a
degenerate lag or $h^\star$ is not a realizable envelope, or both hypotheses hold and
Eq.~\eqref{eq:gap_sandwich} forces $\sum_{m\in\mathcal S}\abs{R_g(m)}^2\ge4\Phi_g^2\not\to0$,
which is compatible with $C_g\to0$ since residuals of size $N^{-1/2}$ at the
$\abs{\mathcal S}=\Theta(N)$ regular lags suffice. The two channels need no
$\mathrm{Im}\wm\neq0$ to open, and the third mechanism requires it at no lag above the vanishing
certificate, so none certifies coherence through a single-lag observable. What all three certify
is that $p_g$ has left the smoothing class, i.e.\ that the filter is not doing what it advertises.

In neither cell is the separation attributable to the filter. The mode count of part~(i) depends
on $g$ only through $\abs{g}$ and on the state only through $\psucc\ge\pstar$, while the residual
that carries a certified gap is sourced entirely by the phases of $\rho_m$. The filter supplies
the real window $W_m$ and the post-selection cost $\psucc$, and nothing else.
\end{proof}

\section{Spectral filtering on the hypercube}
\label{app:hypercube}
Character orthogonality, the convolution theorem, Parseval's theorem and Bochner's
characterization exist on any finite abelian group~\cite{rudin1960}, so
Eqs.~\eqref{eq:master_id}--\eqref{eq:coherent_filter}, Prop.~\ref{prop:criterion} and the
washout bound Eq.~\eqref{eq:washout_bound} hold there verbatim, with the lag difference $k-m$
read as the group difference, on $\mathbb{Z}_2^n$ the bitwise XOR $k\oplus m$. The exception is
Lemma~\ref{lem:compact}, which counts lags along a line: on the Boolean cube the lags at
distance $w$ number $\binom{n}{w}$, so the tail sum of Eq.~\eqref{eq:tvtail} is no longer
controlled by the decay of $\sup_k\abs{W_m(k)}$ alone. Its role is played by the degree bound
below, which, unlike Lemma~\ref{lem:compact}, is a hypothesis on the input rather than on the
filter. The characters $\chi_k(x)=(-1)^{k\cdot x}$ are real and every element is its own
inverse, so kernel symmetry is automatic and $\mathcal{K}=\mathcal{K}_+$, which makes the
reflection caveat of Def.~\ref{def:kernels} vacuous, $\hat p(m)$ and $\pg(m)$ are real, and
every envelope is real with $\abs{\hat K(m)}\le1$.

\emph{Filter and kernel.}
Write the low-pass of Sec.~\ref{sec:hypercube} as $g_\eta(k)=\lambda^{\abs{k}}$ with flip rate
$\eta\in[0,\tfrac12]$ and $\lambda=1-2\eta\in[0,1]$. Using
\[
  |k|+|k\oplus m|=|m|+2|k\setminus m|\,,
\]
where $\abs{k\setminus m} = \sum_{j=1}^n k_j(1-m_j)$ counts the 1-bits of $k$ at positions where $m$ has a 0,
the pair weight factorizes as
\[
  W_m(k)
  =g_\eta(k)g_\eta^*(k\oplus m)
  =\lambda^{|m|}\bigl(\lambda^2\bigr)^{|k\setminus m|}.
\]
For the independent bit-flip kernel
\[
  K_\eta(z)=\eta^{|z|}(1-\eta)^{n-|z|},
\]
the Walsh transform (the Fourier transform on the Boolean cube) is
\[
  \hat K_\eta(m)
  =\prod_{j=1}^{n}\big[(1-\eta)+\eta(-1)^{m_j}\big]
  =\lambda^{|m|}.
\]
Therefore
\[
  W_m(k)
  =\hat K_\eta(m)\bigl(\lambda^2\bigr)^{|k\setminus m|},
\]
which is Eq.~\eqref{eq:bitflip_fact}. Thus the pair weight consists of the
classical bit-flip transfer function and an additional $k$-dependent window.

\begin{lemma}\label{lem:real_wm}
On $\mathbb{Z}_2^n$, for every diagonal filter $g$ and every input state,
$S_m=\sum_kW_m(k)\rho_m(k)$ is real. Consequently $\wm\in\mathbb{R}$ and $R_g(m)=0$ at every
regular lag.
\end{lemma}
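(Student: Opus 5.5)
The cleanest route avoids any computation with $W_m$ or $\rho_m$ and instead identifies $S_m$ with an unnormalized Fourier coefficient of a real function. On $\mathbb{Z}_2^n$ the derivation of Eq.~\eqref{eq:master_id} goes through verbatim with the Walsh characters $\chi_k(x)=(-1)^{k\cdot x}$ in place of $\omega^{kx}$. Here "verbatim" means character orthogonality $\sum_x\chi_{k\oplus k'\oplus m}(x)=2^n[k\oplus k'=m]$ replaces the cyclic identity. This gives
\[
  S_m=\sum_k g(k)\hat a(k)\,g^*(k\oplus m)\hat a^*(k\oplus m)=\sum_x \abs{\tilde a(x)}^2(-1)^{m\cdot x},
\]
where $\tilde a=H^{\otimes n}D_g\hat a$ is the unnormalized filtered amplitude vector, i.e. $\tilde a(x)=2^{-n/2}\sum_k g(k)\hat a(k)(-1)^{k\cdot x}$. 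I would check this identity by expanding $\abs{\tilde a(x)}^2$ as a double sum over $k,k'$ and summing over $x$ first. The only step that needs care is that $-m=m$ in $\mathbb{Z}_2^n$, so the lag difference $k-m$ becomes $k\oplus m$ with no sign ambiguity.

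With this identity in hand, reality is immediate. Both $\abs{\tilde a(x)}^2$ and $(-1)^{m\cdot x}$ are real, so $S_m\in\mathbb{R}$. This holds for every $g$ (magnitude or phase) and every input, since no property of $g$ or $\hat a$ beyond the character identity was used. The special case $g\equiv1$ gives $\hat p(m)=\sum_k\rho_m(k)\in\mathbb{R}$, and also $\psucc=S_0\in\mathbb{R}_{>0}$.

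The consequences then follow in two steps. First, at a regular lag, $\wm=S_m/\hat p(m)$ is a ratio of two reals with nonzero denominator, so $\wm\in\mathbb{R}$. Second, Eq.~\eqref{eq:residual_final_form}, which was derived from the orthogonal projection and holds on any finite abelian group by the transfer remarks at the start of this appendix, gives $R_g(m)=i\,\mathrm{Im}\wm\,\hat p(m)/\psucc=0$.

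There is a second way to see reality, by symmetry: conjugating $S_m$ and reindexing $k\mapsto k\oplus m$ maps $\sum_k W_m(k)\rho_m(k)$ to its own conjugate. This uses that $W_m(k\oplus m)=W_m(k)^*$ and $\rho_m(k\oplus m)=\rho_m(k)^*$, and I would mention it as a cross-check. The main obstacle is only bookkeeping: making sure the normalization conventions ($2^{-n/2}$ for amplitudes, unnormalized for distributions) match those of Eq.~\eqref{eq:master_id}, so that $S_m$ is exactly $\psucc\,\pg(m)$ and not a rescaled version.
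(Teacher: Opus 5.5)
Your proof is correct, and the identity $S_m=\psucc\,\pg(m)=\sum_x\abs{\tilde a(x)}^2(-1)^{m\cdot x}$ holds: summing over $x$ first forces $k'=k\oplus m$. It is not the route the appendix takes, though. There the lemma is proved by the argument you list as a cross-check. For $m\neq0$ the map $k\mapsto k\oplus m$ is a fixed-point-free involution with $W_m(k\oplus m)=W_m(k)^*$ and $\rho_m(k\oplus m)=\rho_m(k)^*$, so $S_m$ splits into real pairs $2\,\mathrm{Re}[W_m(k)\rho_m(k)]$, and $S_0$ is real trivially. Your primary route matches the one-line reasoning the paper gives in the main text of Sec.~\ref{sec:hypercube}: real characters make $\hat p$ and $\pg$ real. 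Both arguments rest on the same fact, that every element of $\mathbb{Z}_2^n$ is its own inverse, or equivalently that all characters are real. They buy different things, however. Yours is more conceptual. It gives the reality of $\hat p(m)$ and $\psucc$ for free as the cases $g\equiv1$ and $m=0$. Once $\hat p(m)$ and $\pg(m)$ are both real, $h^\star(m)=\pg(m)/\hat p(m)$ reproduces $\pg(m)$ exactly, so $R_g(m)=0$ follows even without Eq.~\eqref{eq:residual_final_form}. The normalization bookkeeping you flag as the main obstacle does not matter for the lemma, since any positive rescaling preserves reality. The pairing argument never leaves the sum $S_m$ and needs no detour through the filtered amplitudes. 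It also shows exactly what fails on $\mathbb{Z}_N$. There, conjugating and shifting the summation index by $m$ sends $S_m$ to $S_{-m}$, which gives only the Hermiticity $\pg(-m)=\pg(m)^*$. On the cube the coincidence $-m=m$ is what forces $S_m$ to be real.
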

Indeed $W_0(k)=\abs{g(k)}^2$ and $\rho_0(k)=\abs{\ak}^2$ are real, and for $m\neq0$ the map
$k\mapsto k\oplus m$ is a fixed-point-free involution with $W_m(k\oplus m)=W_m(k)^*$ and
$\rho_m(k\oplus m)=\rho_m(k)^*$, so the sum splits into real pairs
$2\,\mathrm{Re}\!\left[W_m(k)\rho_m(k)\right]$; since $\hat p(m)$ is real so is $\wm$, and
Eq.~\eqref{eq:residual_final_form} gives $R_g(m)=0$. No reality assumption on $g$ enters, so on
the cube the certificate Eq.~\eqref{eq:gap_lower} is empty for phase filters as well as for
magnitude ones, and part~(ii) of Prop.~\ref{prop:mechanisms} carries no information there.\qed

\emph{Matched cost.} With $A_w=\sum_{\abs{k}=w}\abs{\ak}^2$, Eq.~\eqref{eq:psucc} reads
$\psucc(\eta)=\sum_w\lambda^{2w}A_w$, continuous and non-increasing in $\eta$ with $\psucc(0)=1$,
so the affordable flip rate
$\eta^\star(n)=\max\{\eta\in[0,\tfrac12]\colon\psucc(\eta)\ge\pstar\}$ is attained. It is the
counterpart of Eq.~\eqref{eq:sigmastar}, with small $\eta$ playing the role of large $\sigma$.

For a low-degree input, $A_w=0$ for $w>w_0$, one has $\psucc\ge\lambda^{2w_0}$, so any $\eta$ with
$\lambda^{2w_0}\ge\pstar$ is affordable and $\eta^\star=\Omega(1)$ whenever $w_0=O(1)$. Since
$\rho_m(k)\neq0$ requires $\abs{k},\abs{k\oplus m}\le w_0$ while
$\abs{m}\le\abs{k}+\abs{k\oplus m}$, Eqs.~\eqref{eq:wiener_khinchin} and \eqref{eq:master_id}
confine both $\hat p$ and $\pg$ to $\abs{m}\le2w_0$ for every $\eta$. The confinement is exact
rather than a truncation, so the clipping used in the proof of Theorem~\ref{thm:dichotomy} is vacuous: prefix marginals
of $p_g(x)=2^{-n}\sum_m\pg(m)(-1)^{m\cdot x}$ are again Walsh sums of degree at most $2w_0$, so its
$O(n^{2w_0})$ coefficients fix every conditional bit probability and $p_g$ is sampled bit by bit in
$\mathrm{poly}(n)$ time.

For a broadband input, $\sum_{w\le L}A_w\to0$ at every fixed $L$, and splitting $\psucc$ at degree
$L$ gives $\psucc(\eta)\le\sum_{w\le L}A_w+\lambda^{2L}$, the analogue of
Eq.~\eqref{eq:psucc_split}. Affordability then forces $(\lambda^\star)^{2L}\ge\pstar-o(1)$ at every
fixed $L$, hence $\lambda^\star\to1$, i.e.\ $\eta^\star(n)\to0$, and $g_{\eta^\star}(k)\to1$ at
every fixed degree: no fixed degree is attenuated, which is Eq.~\eqref{eq:no_attenuation} read on
the cube. Between the two classes the argument is silent, since a bounded $\eta^\star$ does not by
itself bound the degree of $p_g$.

\section{Washout for spectrally incoherent input}
\label{app:washout}
Consider a magnitude filter $g$ ($W_m$ real,
$W_m(k)=\abs{g(k)}\abs{g(k-m)}\ge0$) and $a_x=\sqrt{p(x)}\,e^{i\varphi_x}$ with $\varphi_x$
independent and uniform. In the following, we show that under these conditions, the spectrum effectively
washes out such that $R_g=0$ in expectation. The coherent residual Eq.~\eqref{eq:coherent_residual} is carried
by $\mathrm{Im}\,\wm$, so it suffices to show $\mathrm{Im}\,\wm\to0$ in mean square as the input spreads.
 Note that $p(x)=\abs{a_x}^2$ is deterministic and the only randomness is
in the phases. To isolate the action of the random phases on the spectrum, we write 
$\wm=S_m/\hat p(m)$ with $S_m\equiv\sum_k W_m(k)\rho_m(k)$ random and $\hat p(m)$ a constant.
With this, the expectation of $\wm$ is
\begin{equation}
  \mathbb E[\wm]=\frac{\mathbb E[S_m]}{\hat p(m)}\,,\qquad
  \mathrm{Var}(\wm)=\frac{\mathrm{Var}(S_m)}{\abs{\hat p(m)}^2}\,.
  \label{eq:expect_wm}
\end{equation}
The case $m=0$ is trivial, since $\pg(0)=\hat p(0)=1$ forces $R_g(0)=0$. In the following, we thus fix a regular lag
$m\neq0$. With the particular form of the amplitudes fixed as above and using Eq.~\eqref{eq:fourier_transf}, we have
\begin{equation}
  \rho_m(k) =\frac{1}{N}\sum_{x,y}\sqrt{p_xp_y}e^{i(\phi_x-\phi_y)}\omega^{-kx+(k-m)y}\,,
\end{equation}
Taking the expectation value over the phases yields 
$\mathbb E[e^{i(\varphi_x-\varphi_y)}]=\delta_{xy}$, such that
\begin{equation}
  \mathbb E[S_m]=\frac{\hat p(m)}{N}\sum_k W_m(k)\,.
  \label{eq:ESm}
\end{equation}
Substituting this into Eq.~\eqref{eq:expect_wm} yields $\mathbb E[\wm]=\sum_k W_m(k)/N$, and since $W_m(k)$ is real by assumption. Consequently,
 $\mathbb{E}[R_g] = 0$ such that a magnitude filter thus produces no residual in expectation and any residual is a fluctuation about the mean whose size we bound next.

Expanding $\mathbb{E}[\abs{S_m}^2]$ yields
\begin{align}
  \mathbb{E}[\abs{S_m}^2]&=\mathbb{E}\left[\sum_{k,k'} W_m(k)W_m(k')\rho_m(k)\rho^*_m(k')\right]\\
  &=\sum_{x,y,x',y'}\sum_{k,k'} W_m(k)W_m(k')\omega^{-kx+(k-m)y+k'x'-(k'-m)y'}\sqrt{p_xp_yp_{x'}p_{y'}}\mathbb{E}\left[\exp(i(\phi_x-\phi_y-\phi_{x'}+\phi_{y'}))\right]\,.\label{eq:expect_ssquared}
\end{align}
Note that the expectation value is nonzero only for the pairings $x=y,\ x'=y'$ and $x=x',\ y=y'$. The first pairing reconstructs
$\abs{\mathbb{E} [S_m]}^2$ and cancels in the variance. Subtracting the coincidence $x=y=x'=y'$ once thus leads to
\begin{align}
  \text{Var}(S_m) &= \frac{1}{N^2}\sum_{x\neq y}\sum_{k,k'} W_m(k)W_m(k')\omega^{-(k-k')(x-y)}p_xp_y\\ 
  &=\frac1{N^2}\sum_{x\neq y}p(x)\,p(y)\,\abs{\widehat W_m(x-y)}^2\,.
  \label{eq:VarSm}
\end{align}
Where we have introduced the Fourier transform of the pair-weights $\hat{W}_m(l)=\sum_k W_m(k)\omega^{-kl}$. Dropping the constraint $x\neq y$ (the omitted terms are nonnegative), bounding one factor
$p(x)\le p_{\max}\equiv \text{max}_x p_x$, completing the sum over $x$ then yields
\begin{equation}
  \text{Var}(S_m)\leq \frac{p_{\rm max}}{N^2}\sum_k\abs{\hat{W}_m(k)}^2\,.
\end{equation}
Using Parseval's theorem Eq.~\eqref{eq:parseval} finally gives
\begin{equation}
  \text{Var}(S_m)\leq \frac{p_{\rm max}}{N}\sum_x W_m(x)^2\leq p_{\rm max}\,.
  \label{eq:sm_bound}
\end{equation}
where we have used $W_m(k)^2=\abs{g(k)}^2\abs{g(k-m)}^2\le1$, so $\sum_k W_m(k)^2\le N$ in the last step. We now return to the residual which in expectation
is bounded as
\begin{equation}
  \mathbb E\big[\abs{R_g(m)}^2\big]
  =\mathbb E\big[(\mathrm{Im}\,\wm)^2\big]\,\frac{\abs{\hat p(m)}^2}{\psucc^2}
  \le\frac{\mathrm{Var}(S_m)}{\psucc^2}
  \le\frac{p_{\max}}{(\pstar)^2}\,,
\end{equation}
which is Eq.~\eqref{eq:washout_bound}. 
Here we have used $\mathbb E\big[(\mathrm{Im}\,\wm)^2\big]\le\mathrm{Var}(\wm)=\mathrm{Var}(S_m)/\abs{\hat p(m)}^2$, which is bounded by Eq.~\eqref{eq:sm_bound}. The bound is uniform in $m$, so by Markov's inequality
$R_g(m)\to0$ in mean-square at every lag as $p_{\max}\to0$ with $\psucc\ge \pstar$ fixed. 

Note that for a phase filter we have $W_m(k)\in\mathbb{C}$. In this case, the residual does not vanish in general
so the same incoherent input does not wash out. \qed

\emph{Scope of the bound.} Equation~\eqref{eq:washout_bound} constrains one lag at a time and
does not transfer to $\Phi_g$. Summed over the $\abs{\mathcal S}=\Theta(N)$ regular lags of a
broadband output it returns at least $1/(\pstar)^2>1$, since $p_{\max}\ge1/N$ always, so it
never feeds the upper bound of Eq.~\eqref{eq:gap_sandwich}. A residual of size $N^{-1/2}$ spread
over every lag, which is what the washout produces, saturates the conversion
$\norm{f}_1\le\sqrt N\norm{f}_2$ behind that bound. What Eq.~\eqref{eq:washout_bound}
establishes is that the certificate Eq.~\eqref{eq:gap_lower} vanishes, not that the gap closes.
The uniform member of Sec.~\ref{sec:washout} separates the two: there
$(p\star K)(x)=\tfrac1N\sum_yK(y)=p(x)$ for every $K\in\mathcal K$, so $\Phi_g=\mathrm{TV}(p_g,p)$
exactly, while matched cost fixes $\sigma^\star/N\to\pstar/\sqrt\pi$, giving the intensity $Np_g$
an $n$-independent nondegenerate law. Numerically, at $\pstar=0.1$ over $32$ phase seeds per
size, the exact gap plateaus at $\Phi_g=0.345,\,0.355,\,0.352,\,0.354,\,0.354$ for $n=8,10,12,14,16$,
while the mode count grows from $35$ to $8684$ and the modulus bound
Eq.~\eqref{eq:real_channel_bound} falls from $0.115$ to $0.013$.

\section{Gaussian example}
\label{app:gauss}

For $g(k)=e^{-k^2/2\sigma^2}$, completing the square in $W_m(k)=g(k)g(k-m)$ gives
$\sup_k|W_m(k)|=e^{-m^2/4\sigma^2}$ (attained at $k=m/2$), so the overlap
profile is $s(u)=e^{-u^2/4}$ and $s(u)^2=e^{-u^2/2}$. Hence
\begin{equation*}
  T(M/\sigma)=\int_{M/\sigma}^\infty e^{-u^2/2}\,du\;\le\;e^{-M^2/2\sigma^2},
\end{equation*}
a Chernoff tail bound (valid for $M\ge\sigma$). Substituting into Eq.~\eqref{eq:T_bound},
\begin{equation*}
  \mathrm{TV}(p_g,p_g^{\le M})\;\le\;\frac{\sqrt{\sigma}\,e^{-M^2/4\sigma^2}}{\sqrt2\,\psucc}.
\end{equation*}
Imposing $\mathrm{TV}\le\epsilon$ and using $\psucc\ge \pstar$ gives, at equality,
\begin{equation*}
  e^{-M^2/2\sigma^2}=\frac{2(\pstar\epsilon)^2}{\sigma}
  \qquad\Longrightarrow\qquad
  M^\star(\sigma;\epsilon)=\sigma\sqrt{2\ln\!\frac{\sigma}{2(\pstar\epsilon)^2}}
  =O\!\big(\sigma\sqrt{\ln(1/\epsilon)}\big),
\end{equation*}
consistent with the general form $M^\star\le\sigma\,T^{-1}\!\big(2(\pstar\epsilon)^2/\sigma\big)$.

\section{Exact gaps for the numerical experiments}\label{app:numerics}
\begin{table}[t]
\caption{Cost of dropping reflection symmetry, by input family, over all instances at
$8\le n\le12$. A dagger marks a deterministic family, with one instance per size.}
\label{tab:kernel_class}
\centering
\begin{tabular}{lccc}
\hline\hline
& & \multicolumn{2}{c}{$\Phi_g^+/\Phi_g$}\\
\cline{3-4}
input & instances & median & min\\
\hline
phase-trivial & $5^\dagger$ & $1.000$ & $1.000$\\
spectrally incoherent & $160$ & $0.997$ & $0.947$\\
coherent chirp & $5^\dagger$ & $0.966$ & $0.966$\\
Gaussian spectrum & $5^\dagger$ & \multicolumn{2}{c}{$\Phi_g=\Phi_g^+=0$}\\
random-parameter circuits & $320$ & $0.931$ & $0.201$\\
trained circuits & $320$ & $0.998$ & $0.833$\\
\hline\hline
\end{tabular}
\end{table}
The gap Eq.~\eqref{eq:gap} is an infimum of a piecewise-linear function of
$K$ over a polytope, hence a linear program:
\begin{equation}
  \begin{aligned}
    \underset{K\in\mathbb{R}^N}{\text{minimize}}\quad
      &\tfrac12\norm{CK-p_g}_1\\[2pt]
    \text{subject to}\quad
      &K(x)\ge0\,,\qquad\sum_xK(x)=1\,,\\[2pt]
      &K(-x)=K(x)\,,
  \end{aligned}
  \label{eq:gap_lp}
\end{equation}
where $C_{xu}=p(x-u)$ is the circulant of the input, so $CK=p\star K$ and the objective is
$\mathrm{TV}(p\star K,p_g)$. The last constraint selects $\mathcal{K}$ and returns $\Phi_g$;
dropping it selects $\mathcal{K}_+$ and returns $\Phi_g^+$.
We solve both programs exactly for every input reported in
Fig.~\ref{fig:gaussian_low_pass}, at $8\le n\le12$ where the $O(N)$ program is affordable:
$1647$ instances in total, comprising the four constructed families and the trained and
random-parameter circuits together with their phase controls. This replaces the bracket of
Eq.~\eqref{eq:gap_lower} by $\Phi_g$ itself, and returns $\Phi_g^+$ alongside it.
The instance counts differ by family only because the families differ in what there is to
sample: the phase-trivial, Gaussian-spectrum and chirp inputs are deterministic constructions
with one state per size, so their five instances are the five sizes and a median or minimum
over them reports the $n$-dependence rather than an ensemble spread; the incoherent family
draws $32$ phase seeds per size, and each circuit family runs $32$ seeds at $d=2$ and $d=4$.
The phase-erased control is the one variant whose program is not affordable at every size, since
its near-uniform filtered output is matched only by a kernel spread over a finite fraction of the
torus; it is solved at $n\le10$, where its gap is already flat
($0.0380,\,0.0380,\,0.0383$ at $n=8,9,10$ for $d=4$), and its instance count is correspondingly
smaller.

Two facts follow. First, dropping reflection symmetry from the kernel family does not result
in meaningful advantages for the classical competitor, which justifies the restriction to
smoothing kernels in the main text (Table~\ref{tab:kernel_class}).
The phase-trivial input has $\Phi_g=\Phi_g^+=0.039$. The Gaussian-spectrum input attains
$\Phi_g=\Phi_g^+=0$ to within the solver residual, so its ratio carries no information and it
is listed separately.
No instance changes cell membership. Every trained circuit keeps $\Phi_g^+\ge0.115$, and the
optimizer over $\mathcal{K}_+$ is strongly asymmetric throughout (median
$\tfrac12\sum_x\abs{K(x)-K(-x)}=0.78$), so the shift freedom is exercised but does not significantly
improve the performance.
There are two exceptions: The chirp loses a fixed $3.4\%$ at every
$n$, which is the deterministic shift of $\mathcal{K}_+$ acting on the displaced output of a
quadratic phase. The random-parameter circuits are the loosest family, with isolated seeds
retaining only $20\%$ of their gap. They remain bounded away from zero
($\min\Phi_g^+=0.036$), but the trained circuits, which carry the claim, do not rely on this.

Second, the bracket is loose downwards and tight upwards: on the trained
circuits, pooled over both depths, the certificate $C_g$ understates $\Phi_g$ by a median factor
$3.0$ (range $1.2$--$34$), whereas the Gaussian-kernel bound exceeds it by a median $0.06\%$
and at most $3.3\%$. The true gap therefore sits at the top of the reported bracket for the
circuits and, to $0.1\%$, for the spectrally incoherent family. The chirp is an exception,
falling near mid-bracket ($\Phi_g=0.515$ against $0.449$ and $0.579$).

Both programs are solved by column generation over the kernel lags with a dual-feasibility
certificate, cross-checked against the full design. Across all instances
$\Phi_g^+-\Phi_g\le2.6\times10^{-10}$, the returned kernel reproduces its own objective to
$3.8\times10^{-8}$, and $\tfrac12\max_m\abs{R_g(m)}\le\Phi_g\le\mathrm{TV}(p\star K_\sigma,p_g)$
holds at every instance.

\end{document}